\documentclass{article}
\usepackage{amsmath,amsthm,amsfonts,graphicx,amssymb}
\usepackage{multirow}
\usepackage{physics}
\usepackage{subcaption}
\usepackage{booktabs}
\usepackage{algorithm}
\usepackage[noend]{algpseudocode}
\usepackage{enumerate}

\newtheorem{theorem}{Theorem}
\newtheorem{lemma}{Lemma}

\newtheorem{definition}{Definition}
\newtheorem{proposition}{Proposition}
\newtheorem{corollary}{Corollary}

\newtheorem{fact}{Fact}
{\theoremstyle{definition}
}

\usepackage{color}
\setlength\parindent{0pt}
\setlength\parskip{3.5pt}
	
\usepackage[export]{adjustbox}
\usepackage{cleveref}

\usepackage{tikz}
\usepackage{scalerel}
\usetikzlibrary{matrix,calc,shapes,bending} 
\usetikzlibrary{knots,calc}

\usetikzlibrary{decorations.pathmorphing} 
\usetikzlibrary{fit}					
\usetikzlibrary{backgrounds}	
\usetikzlibrary{fit, positioning}

\usepackage{bold-extra}

\usepackage{fullpage}
\usepackage{url}

\DeclareMathSymbol{\mlq}{\mathord}{operators}{``}
\DeclareMathSymbol{\mrq}{\mathord}{operators}{`'}

\makeatletter
\newcommand\niton{\mathrel{\m@th\mathpalette\canc@l\owns}}
\newcommand\canc@l[2]{{\ooalign{$\hfil#1/\mkern1mu\hfil$\crcr$#1#2$}}}
\makeatother

\newif\ifcccg

\cccgfalse

\begin{document}

\title{Red-Blue-Partitioned MST, TSP, and Matching\footnote{A preliminary version of this work was presented at CCCG 2018 \cite{johnsonredbluecccg2018}.}}

\author{Matthew P. Johnson\thanks{Department of Computer Science, Lehman College and Ph.D. Program in Computer Science, The Graduate Center, City University of New York}
}

\maketitle

\begin{abstract}
Arkin et al.~\cite{ArkinBCCJKMM17} recently introduced \textit{partitioned pairs} network optimization problems: given a metric-weighted graph on $n$ pairs of nodes, the task is to color one node from each pair red and the other blue, and then to compute two separate \textit{network structures} or disjoint (node-covering) subgraphs of a specified sort, one on the graph induced by the red nodes and the other on the blue nodes. Three structures have been investigated by \cite{ArkinBCCJKMM17}---\textit{spanning trees}, \textit{traveling salesperson tours}, and \textit{perfect matchings}---and the three objectives to optimize for when computing such pairs of structures: \textit{min-sum}, \textit{min-max}, and \textit{bottleneck}. We provide improved approximation guarantees and/or strengthened hardness results for these nine NP-hard problem settings.

\end{abstract}


\newcommand{\StFactPar}{3}
\newcommand{\StFact}{3}
\newcommand{\StFactMo}{2}
\newcommand{\StFactBf}{\mathbf{3}}

\newcommand{\mmStFact}{4}
\newcommand{\mmStFactBf}{\mathbf{\mmStFact}}

\newcommand{\msTspFact}{4}
\newcommand{\msTspFactBf}{\mathbf{\msTspFact}}

\newcommand{\mmTspFact}{4}
\newcommand{\mmTspFactBf}{\mathbf{\mmTspFact}}

\newcommand{\St}{\rho_\text{St}}
\newcommand{\tsp}{\rho_\text{tsp}}

\newcommand{\Stbf}{\rho_\mathbf{St}}
\newcommand{\tspbf}{\rho_\mathbf{tsp}}

\newcommand{\C}{$C_{6\times}$}

\newcommand{\Ccnsp}{\C-cover}

\newcommand{\Cc}{\Ccnsp\ }

\newcommand{\T}{\mathcal{T}}
\newcommand{\V}{\mathcal{V}}
\newcommand{\E}{\mathcal{E}}

\newcommand{\n}{\mathfrak{n}}

\hyphenation{sub-hypergraph}

\section{Introduction}

We consider the class of \textit{partitioned pairs} network optimization problems recently introduced by Arkin et al.~\cite{ArkinBCCJKMM17}.
Given a complete metric-weighed graph $G$ whose vertex set consists of $n$ pairs $\{p_1,q_1\},...,\{p_n,q_n\}$ (with $n$ even), the task is to color one node from each pair red and the other blue, and then to compute two \textit{network structures} or disjoint (node-covering) subgraphs of a specified sort, one on the graph induced by the blue nodes and the other on the red nodes. One motivation is robustness: if the pairs represent $n$ different types of resources needed to build the desired network structure, with two available instances $p_i,q_i$ of each type $i$, then solving the problem means computing two separate independent instances of the desired structure, one of which can be used as a backup if the other fails. 

The structures that have been investigated are \textit{spanning trees}, \textit{traveling salesperson}, and \textit{perfect matchings}. A solution consists of a disjoint pair of subgraphs covering all nodes, i.e., two (partial) matchings, two trees, or two cycles, and there are different potential ways of evaluating the cost of the pair. The optimization objectives that have been considered are: 1) minimize the sum of the two structures' costs ({\em min-sum}), 2) minimize the maximum of the two structures' costs ({\em min-max}),  and 3) minimize the weight of the heaviest edge used in either of the structures ({\em bottleneck}).

\begin{table*}[htbp]
\caption{
\ifcccg
\small 
\else
\fi
Summary of results. $R, B \subseteq E$ denote the red and blue solutions, respectively. UB values indicate the approximation factors we obtain, {\em all for general metric spaces}; LB values indicate hardness of approximation lower bounds, {\em all (except min-sum and min-max TSP) for the special case of metric weights $\{1,2\}$}. 
Best prior bounds (all due to \cite{ArkinBCCJKMM17}) are also shown, where $\rho_\text{St} \le 2$ denotes the underlying metric space's Steiner ratio (conjectured to be ${2 \over \sqrt{3}} \approx 1.1547$ in Euc.~2D \cite{ivanov2012steiner}),
and $\tsp$ denotes TSP's best achievable approximation factor in the underlying metric space (currently $\tsp=1.5$ in general \cite{christofides1976worst}).
\label{tbl:results}
}
\vskip -.5cm
\centering
{ 
\ifcccg
\newcommand{\smaller}{\footnotesize}
\small
\begin{tabular}{cr@{\hskip .75cm}cc@{\hskip -.05cm}c} \\
\else
\newcommand{\smaller}{\small}
\begin{tabular}{cr@{\hskip 1cm}cc@{\hskip -.025cm}c} \\
\fi
\toprule
&&  min-sum & min-max & bottleneck \\
&&  $c(R)+c(B)$ & $\max\{c(R),c(B)\}$ & $\max\{w_e : e \in B \cup R\}$ \\
\midrule

\multirow{4}{*}{MST}
 & \textbf{{our UB}}: & \normalsize{$\StFactBf$} & \normalsize{$\mmStFactBf$} & \normalsize{$\mathbf{-}$} \\
 & \smaller{\cite{ArkinBCCJKMM17}'s UB}: & $\smaller{(3 \St)}$ & $\smaller{(4 \St)}$ & $\smaller{(9)}$ \\
 & \textbf{{our LB}}: & \textbf{NP-h} & \textbf{NP-h} & \textbf{2} \\
 & \smaller{\cite{ArkinBCCJKMM17}'s LB}: & (-) & \smaller{(NP-h in metric)} & (-) \\
 \cmidrule{2-5}
\multirow{4}{*}{TSP}
 & \textbf{{our UB}}: & \normalsize{$\msTspFactBf$} & \normalsize{$\mmTspFactBf$} & \normalsize{$\mathbf{-}$} \\
 & \smaller{\cite{ArkinBCCJKMM17}'s UB}: & $\smaller{(3 \tsp)}$ & $\smaller{(6 \tsp)}$ & $\smaller{(18)}$ \\
\ifcccg
 & \textbf{{our LB}}: & \multicolumn{2}{c}{\hskip -.5cm \normalsize{$\mathbf{123/122 \approx 1.00819}$} \smaller\textbf{\em with metric weights $\mathbf{\{.5,1,1.5,2\}}$}} & \normalsize{$\mathbf{2}$}\\
\else
 & \textbf{{our LB}}: & \multicolumn{2}{c}{\hskip -.7cm \normalsize{$\mathbf{123/122 \approx 1.00819}$} \smaller\textbf{\em with metric weights $\mathbf{\{.5,1,1.5,2\}}$}} & \normalsize{$\mathbf{2}$}\\
\fi
 & \smaller{\cite{ArkinBCCJKMM17}'s LB}: & (-) & \smaller{(-)} & (-) \\
 \cmidrule{2-5}

\multirow{4}{*}{matching}
 & \textbf{{our UB}}: & \normalsize{$\mathbf{-}$} & \normalsize{$\mathbf{-}$} & \normalsize{$\mathbf{-}$} \\
 & \smaller{\cite{ArkinBCCJKMM17}'s UB}: & $\smaller{(2)}$ & $\smaller{(3)}$ & $\smaller{(3)}$ \\
 & \textbf{{our LB}}: & \normalsize{$\mathbf{{8305 \over 8304} \approx 1.00012}$} & \normalsize{$\mathbf{{8305 \over 8304} \approx 1.00012}$}  & \normalsize{$\textbf{2}$} \\
 & \smaller{\cite{ArkinBCCJKMM17}'s LB}: & \smaller{(NP-h in metric)} & \smaller{(weakly NP-h in 2D Euc.)}  & (-) \\

\bottomrule
\end{tabular}
}
\end{table*}

\noindent \textbf{Contributions.}
We provide a variety of results for these nine problem settings (all of which turn out to be NP-hard; see Table \ref{tbl:results}), including algorithms with improved approximation guarantees and/or stronger hardness results for each. In particular, we provide tighter analyses of the approximation factors 
of Arkin et al.~\cite{ArkinBCCJKMM17}'s min-sum/min-max 2-MST algorithm, which is equivalent to Algorithm \ref{alg:oumst} below. We show that the algorithm provides approximation guarantees of \StFact\ and \mmStFact\ for 2-MST with objectives min-sum and min-max, respectively.
%
We also show that a simple extension of this algorithm (see Algorithm~\ref{alg:outsp} below) provides  a 4-approximation for 2-TSP for both min-sum and min-max. All four approximation factors are tight.
\ifcccg
See the full version of the paper for omitted proofs.
\else
\fi


\vskip .1cm
\noindent \textbf{Related work.}
The primary antecedent of this work is Arkin et al.~\cite{ArkinBCCJKMM17} (see also references therein), which introduced the class of 2-partitioned network optimization problems.
Earlier related problem settings include optimizing a path visiting at most one node from each pair \cite{gabow1976two}, generalized MST 
\ifcccg
\cite{myung1995generalized,pop2004new,bhattacharya2015approximation},
\else
\cite{myung1995generalized,pop2004new,pop2001relaxation,bhattacharya2015approximation},
\fi
generalized TSP \cite{bhattacharya2015approximation},
%
constrained forest problems \cite{goemans1995general},
%
%
adding conflict constraints to MST \cite{zhang2011minimum,kante2013trees,darmann2011paths}
and 
to perfect matching \cite{oncan2013minimum,darmann2011paths}, and
balanced partition of MSTs \cite{andersson2003balanced}.

%
%
%
%


\section{2-MST}

\ifcccg
\subsection{Min-sum/min-max 2-MST: algorithm}

In this section we give a simple algorithm (see Algorithm \ref{alg:oumst}) that provides an approximation guarantee for 2-MST under both the min-sum and min-max objectives.
The key lemma the approximation guarantee relies on proves a property about the result of partitioning a metric-edge-weighted spanning tree into a 2-component spanning forest. 

Initially
we show that {\em for any 2-coloring} $V_b \cup V_r = V$ of an arbitrary metric-edge-weighted graph ({\em even without the constraint of specified pairs being colored differently}), the sum of the costs of MSTs on $V_b$ and $V_r$ will be at most three times the cost of an MST on $V$, and each of them alone will be at most two times this; both inequalities are tight.
\else
\subsection{Decomposing a 2-colored spanning tree}
In this section we prove a key lemma used in the next section's approximation analysis, on the result of partitioning a metric-edge-weighted spanning tree into a 2-component spanning forest. Specifically, we show that {\em for any 2-coloring} $V_b \cup V_r = V$ of an arbitrary metric-edge-weighted graph ({\em even without the constraint of specified pairs being colored differently}), the sum of the costs of MSTs on $V_b$ and $V_r$ will be at most three times an MST on $V$, and each of them alone will be at most two times this. 
\fi
\ifcccg

\else
\fi

\begin{lemma}\label{lem:2submsts}
Let $V$ be the nodes of a metric-weighted graph.
Let $T$ be an MST on $V$, and let $V_{b} \cup V_{r} = V$ be any 2-coloring of $V$, and let $T_{b}$ and $T_{r}$ be MSTs of $V_{b}$ and $V_{r}$, respectively. Then we have:
\begin{enumerate}[(a)]
\item $c(T_{b})+c(T_{r}) ~\le~ \StFact c(T)$, \:and
\item $\max\{c(T_{b}),c(T_{r})\} ~\le~ \StFactMo c(T)$.
\end{enumerate}
\end{lemma}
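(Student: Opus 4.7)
My plan is to handle (b) with an Euler-tour shortcut and (a) with a DFS-based shortcut construction combined with a careful combinatorial analysis.

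For (b), I would take a DFS Euler tour of $T$ of length $2c(T)$ and shortcut it to visit only the vertices of $V_b$ in order of their first appearance. By the triangle inequality, the resulting closed walk on $V_b$ has cost at most $2c(T)$; removing any one edge yields a spanning path of $V_b$, so $c(T_b) \le 2c(T)$ and symmetrically $c(T_r) \le 2c(T)$. Tightness follows from a star with the center in one color class and all leaves in the other.

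For (a), I construct spanning trees $\tilde T_b$ on $V_b$ and $\tilde T_r$ on $V_r$ with $c(\tilde T_b) + c(\tilde T_r) \le 3c(T)$; since $T_b, T_r$ are MSTs, the lemma then follows. Fixing a DFS of $T$, let $b_1, \ldots, b_{n_b}$ and $r_1, \ldots, r_{n_r}$ be the vertices of $V_b$ and $V_r$ in first-visit order, and take $\tilde T_b, \tilde T_r$ to be the metric paths $b_1 \cdots b_{n_b}$ and $r_1 \cdots r_{n_r}$. For each edge $e \in T$, let $\mu_b(e)$ (resp.\ $\mu_r(e)$) count the consecutive pairs in the blue (resp.\ red) DFS sequence whose $T$-path passes through $e$; then by the triangle inequality, $c(\tilde T_b) + c(\tilde T_r) \le \sum_{e \in T}(\mu_b(e) + \mu_r(e))w(e)$. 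Each $\mu_b(e), \mu_r(e) \in \{0,1,2\}$, and the value~$2$ is attained only when the corresponding color class meets each of the three DFS-defined regions around $e$ (the portions ``before,'' ``inside,'' and ``after'' the subtree on the far side of $e$).

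The principal obstacle is to show that one can choose the DFS child-orderings so that $\mu_b(e) + \mu_r(e) \le 3$ for every edge $e$, which then yields the target $3c(T)$ bound. My proposed rule at each internal vertex $u$ is: visit the blue-monochromatic-subtree children first, the bichromatic (``mixed'') ones next, and the red-monochromatic children last. Edges into a monochromatic subtree immediately satisfy $\mu_b + \mu_r \le 2$; the delicate case is a vertex $u$ with multiple mixed children, where an ``interior'' mixed-child edge can naively yield $\mu_b + \mu_r = 4$. Resolving this subcase requires either refining the ordering rule among mixed children (e.g., sorting by the root-color of each mixed subtree, possibly via a recursive argument) or a local fallback to a cluster-aware construction in which the monochromatic tree edges of $T$ are kept verbatim and each bichromatic edge is charged at most $2w(e)$ to the combined cost. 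Tightness of factor~$3$ is witnessed by the family of trees consisting of a red center $c$ joined by $k$ arms of length $L$, each ending in a blue leaf: then $c(T_b) = 2L(k-1)$, $c(T_r) = (L-1)k$, and $(c(T_b) + c(T_r))/c(T) = 3 - 2/k - 1/L \to 3$.
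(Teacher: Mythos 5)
Your part (b) is correct and essentially matches what the paper gets as a byproduct of its construction: an Euler tour of cost $2c(T)$ shortcut to one color class gives each of $c(T_b),c(T_r)\le 2c(T)$. The problem is part (a), and you have correctly diagnosed but not closed the gap yourself: you need the DFS child-orderings to guarantee $\mu_b(e)+\mu_r(e)\le 3$ for every edge, and your proposed rule (blue-monochromatic children first, mixed children in the middle, red-monochromatic children last) does not achieve this. Concretely, if an internal vertex $u$ has three children whose subtrees $S_1,S_2,S_3$ are all bichromatic, then for the edge $e$ into whichever $S_i$ is visited second, both colors appear in all three regions (before $S_i$, inside $S_i$, after $S_i$), so $\mu_b(e)=\mu_r(e)=2$ and the edge is charged $4w(e)$ \emph{under every ordering of the children}. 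So the obstacle is not a matter of refining the ordering rule: within your framework, where both $\tilde T_b$ and $\tilde T_r$ are paths listing the vertices in first-visit order of a single DFS, the $3c(T)$ bound is simply false for such instances, and the ``cluster-aware fallback'' you gesture at (monochromatic edges kept verbatim, bichromatic edges charged at most $2w(e)$ combined) cannot be right as stated either, since it would prove a total bound of $2c(T)$, contradicting your own tightness example.

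The paper escapes this by abandoning the single-linear-order viewpoint. It first contracts each monochromatic component $H$, charging $c(H)$ to the tree of $H$'s color and $2c(H)$ to the other tree; in the contracted tree every edge is bichromatic, so colors alternate with depth. It then builds $T_r$ by connecting each red node to its red grandchildren via the intermediate blue node, which makes $T_r$ pay once for every red-to-blue edge and at most twice for every blue-to-red edge, with the symmetric (opposite) charging for $T_b$. The asymmetry is the point: each edge is paid once by one tree and at most twice by the other, hence at most thrice in total, and the two-per-tree bound gives (b) simultaneously. If you want to salvage your approach, you would need to replace your two global DFS paths with something like this per-level, grandparent-to-grandchild shortcutting (i.e., two interleaved but differently structured walks), which is no longer the construction you describe. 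Your tightness example for factor $3$ is fine.
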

\ifcccg
\else
\begin{proof}
Pick an arbitrary node $v^*$ as the root, 
and, for the purposes of this proof, impose an orientation on all edges as directed away from the root.

First consider a monochromatic component $H$ of the graph, i.e., one of the components that would be produced by deleting all bichromatic edges. That is, all the nodes of $H$ are the same color, say, blue. Let $H$'s {\em root} be its node closest to $v^*$ (if $H$ does not contain $v^*$), let its {\em parent} be its (red) neighbor that is the next node on the path to the root, and let its {\em children} be its other neighbors (also red).

Now let us calculate the cost paid by $T_b$ and $T_r$ for $H$'s internal edges. Clearly $T_b$ pays once for each internal edge, i.e., $c(H)$. $T_r$ may wish to visit each of $H$'s nodes (in order, e.g., to reach red neighbors of them), but traversing an edge-doubled $H$ will not cost $T_r$ more than $2c(H)$ (see Fig.~\ref{fig:monochromcompstraight}).

Therefore if we shrunk each monochromatic component $H$ to a single node (see Fig.~\ref{fig:monochromcompcurved}), charging $c(H)$ to $T_b$ and $2c(H)$ to $T_r$ when $H$ is blue and the reverse when $H$ is red, this would render all remaining edges of the resulting shrunken graph bichromaric, and it would pay for $T_b,T_r$ to both reach all nodes within $H$.
Moreover, consider an MST $\hat T_\chi$ of the color-$\chi$ nodes in the shrunken graph. Observe that the tree that $\hat T_\chi$ would induce in the original graph is exactly $T_\chi$, and that the edges $\hat T_\chi - T_\chi$ are exactly the monochromatic edges that $T_\chi$ was already charged (once or twice apiece) for. Therefore assume for simplicity henceforth that all monochromatic components are single nodes,  i.e., all edges of the graph are bichromatic.

Let the {\em depth} of a node in the shrunken graph be the number of hops in its path to $v^*$ (where in the shrunken graph $v^*$ now refers to the node representing the monochromatic component containing $v^*$ in the original graph). 

Root $v^*$ has some color, say, red. Then notice that all red nodes will have even depth and all blue nodes odd depth.

Now, one way $T_r$ could connect a blue node $v_b$'s red parent to its red children is by following a path from $v_b$'s parent to one of its children, and then visiting the in sequence (see Fig.~\ref{fig:monochromcompshrunkcurved}). Then $T_r$ pays for $v_b$'s child edges at most twice and for its {\em red} parent edge only once. For each red node with red grandchildren, connect them thus.
Similarly, by constructing analogously the portion of $T_b$ appearing one level down, beginning with an outgoing edge from $v_b$, we can ensure that $T_b$ pays only once for $v_b$'s child edges, although it could potentially pay twice for $v_b$'s parent edge. For each blue node with blue grandchildren, connect them thus. Finally, connect the (blue) children of the root together sequentially.



Now, first consider $T_r$. Observe that $T_r$ will pay once for each of $v^*$'s edges, shortcutting between each successive pair, and that more generally, $T_r$ will pay only once for every edge {\em from} a red (even-depth) node {\em to} a blue (odd-depth) node. The only type of place where shortcutting will {\em not} be possible, where $T_r$ will potentially pay twice for edges, will be edges {\em from} a blue (odd-depth) nodes {\em to} red (even-depth) nodes. (In essence, $T_r$ will shortcut from red grandparent to red grandchild, and will then traverse the doubled edges from one red grandchild to the next.)
Symmetrically, the charges to $T_b$ are exactly the opposite of this, paying once for odd-to-even-depth edges and potentially twice for even-to-odd-depth edges.

Thus every edge is paid for at most twice by each of $T_b,T_r$, and at most thrice in total.
%
%
\end{proof}

\fi

\ifcccg
\else
Now we refine the argument to improve the combined cost of the two trees slightly, reducing it by the weight of three heavy edges in the following result lemma, which will be a key lemma in proving approximation ratios for the 2-MST problem. %

\begin{theorem}\label{thm:2submsts}
Let $V$ be the nodes of a metric-weighted graph.
Let $T$ be an MST on $V$. Let $V_{b} \cup V_{r} = V$ be any 2-coloring of $V$, and let $T_{b}$ and $T_{2}$ be MSTs of $V_{b}$ and $V_{r}$, respectively. 
Let $e_\times=\{v_L,v_R\}$ be a heaviest edge in $T$, with weight $w_\times$. Let $T_L,T_R$ be the trees (on nodes $V_L,V_R$, respectively) obtained by deleting $e_\times$ from $T$, where $v_L \in T_L$ and $v_R \in T_R$. Let $w_L,w_R$ be the heaviest edge weights appearing in $T_L,T_R$, respectively.
Then we have:
\begin{enumerate}[(a)]
\item $c(T_{b})+c(T_{r}) ~\le~ \StFact c(T)-(w_L+w_\times+w_R)$, \:and
\item $\max\{c(T_{b}),c(T_{r})\} ~\le~ \StFactMo c(T)-(w_L+w_\times+w_R)$.
\end{enumerate}
Moreover, if all nodes of $T_L$ are, say, blue, then:
\begin{enumerate}[(a)]
\setcounter{enumi}{2}
\item $c(T_{b})+c(T_{r}) ~\le~ c(T_L) + w_\times + (\StFact c(T_R)-w_R)$, \:and
\item $\max\{c(T_{b}),c(T_{r})\} ~\le~ c(T_L) + w_\times + (\StFactMo c(T_R) - w_R)$.
\end{enumerate}
\end{theorem}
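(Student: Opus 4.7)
The plan is to refine Lemma~\ref{lem:2submsts}'s charging argument at three specific edges, identifying each as the ``last-visited child'' in some shortcut subpath within that proof's construction. Recall that when Lemma~\ref{lem:2submsts} bounds a shortcut subpath $c_1 \to c_2 \to \cdots \to c_k$ pivoting through a common neighbor $v^*$, the triangle-inequality expansion gives $\sum_i |c_i c_{i+1}| \leq |c_1 v^*| + 2|c_2 v^*| + \cdots + 2|c_{k-1} v^*| + |c_k v^*|$, so the first and last child edges each incur only one copy of their weight instead of two. By choosing the visitation order at each internal node in Lemma~\ref{lem:2submsts}'s scheme, we can therefore save one copy of any chosen child edge's weight, and at the root we can save two.

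For parts (a) and (b), I would root $T$ at $v_L$, which makes $e_\times = (v_L, v_R)$ a root-child edge. I would then apply three ordering choices: at root $v_L$, place $v_R$ first or last in the sequential connection of $v_L$'s opposite-color children, saving $w_\times$; at the parent of the $w_L$-edge inside $T_L$, place that child last in its shortcut subpath, saving $w_L$; and at the parent of the $w_R$-edge inside $T_R$, likewise, saving $w_R$. If $w_L$'s parent coincides with $v_L$ itself, the root-sequential scheme's double-saving accommodates both $w_L$ and $w_\times$ simultaneously (one as first-visited child, the other as last). The three savings combine additively to subtract $w_L+w_\times+w_R$ from Lemma~\ref{lem:2submsts}'s $3c(T)$ bound on $c(T_b)+c(T_r)$ and from its $2c(T)$ bound on $\max\{c(T_b),c(T_r)\}$, yielding (a) and (b).

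For parts (c) and (d), the hypothesis that $V_L \subseteq V_b$ simplifies things substantially: $T_L$ forms a single blue monochromatic component in Lemma~\ref{lem:2submsts}'s pre-shrinking step, so $T_b$ absorbs its internal edges at cost exactly $c(T_L)$ while $T_r$ pays nothing there (no red nodes in $V_L$ to reach). The bridge edge $e_\times$ contributes at most $w_\times$ to $T_b$---directly if $v_R \in V_b$, or via a triangle-inequality replacement from $v_L$ through $v_R$ to a blue descendant if $v_R \in V_r$. For the remaining $V_R$ portion, I would apply a variant of Lemma~\ref{lem:2submsts} to $T_R$ under the induced 2-coloring of $V_R$, augmented by the strategic root-at-$v_R$ and ordering refinement above restricted to $T_R$ alone, yielding $c(\text{MST}(V_R \cap V_b)) + c(\text{MST}(V_r)) \leq 3c(T_R)-w_R$ and the analogous max bound $\leq 2c(T_R)-w_R$. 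Summing the three contributions---$c(T_L)$ for $T_L$, $w_\times$ for the bridge, and the refined bound for $T_R$---gives (c) and (d).

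The main obstacle will be the bookkeeping around monochromatic components, particularly when $e_\times$ (in parts (a)/(b)) or one of the $w_L, w_R$ edges is itself monochromatic, since pre-shrinking collapses such edges and the ``last-child'' saving no longer applies directly; these cases will require a finer traversal argument within the collapsed component to avoid edge-doubling its heaviest internal edge. A secondary subtlety in (c)/(d) is the case $v_R \in V_r$: the bridge cannot be $e_\times$ directly (since $v_R$ is not in $V_b$), so we must exhibit an alternative bridge whose cost, together with the $V_R$-analysis savings, still fits within $w_\times + 3c(T_R) - w_R$.
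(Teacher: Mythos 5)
Your proposal is essentially the paper's own argument: both refine the charging scheme of Lemma~\ref{lem:2submsts} so that the third charge is avoided on each of $e_L$, $e_\times$, $e_R$, and your per-node ``first/last child gets charged only once'' savings are the local view of what the paper phrases globally as making the shortcut walks \emph{open} walks between two leaves whose connecting path contains all three edges. The one step you defer --- the case where one of the three heavy edges lies inside a monochromatic component, so that the last-child saving does not apply directly --- is resolved in the paper by precisely the ``finer traversal'' you anticipate: the doubled traversal of each (leafed) monochromatic component is itself made an open walk along a root-to-leaf path chosen to contain the heavy edge, so the saving survives the shrinking step. Your treatment of (c)/(d) likewise matches the paper's (with the bridge cost when $v_R$ is red absorbed into the first-child saving of the $T_R$ construction rooted at $v_R$), so the proposal is correct in outline with that flagged case filled in as above.
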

\fi

\ifcccg
\else

\begin{figure}[t!]
\center
    \begin{subfigure}[t]{5.2cm}
        \centering
            \includegraphics[width=4cm]{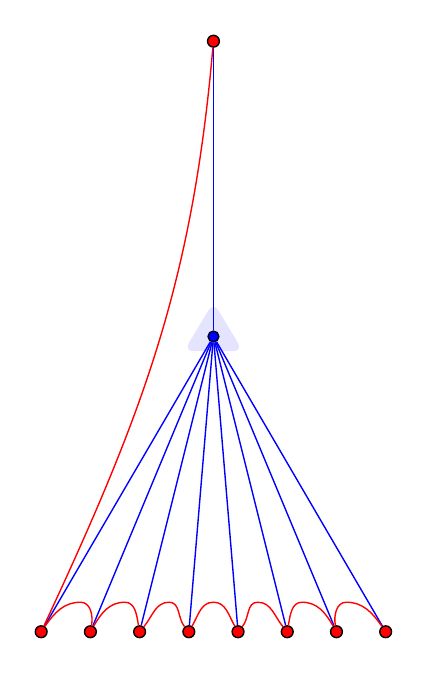}
\vskip -.075cm
\caption{A subpath of $T_r$ passing through a blue node (representing $H$).}
        \label{fig:monochromcompshrunkcurved}
    \end{subfigure}
\hskip 0.35125cm
    \begin{subfigure}[t]{5.2cm}
        \centering
            \includegraphics[width=4cm]{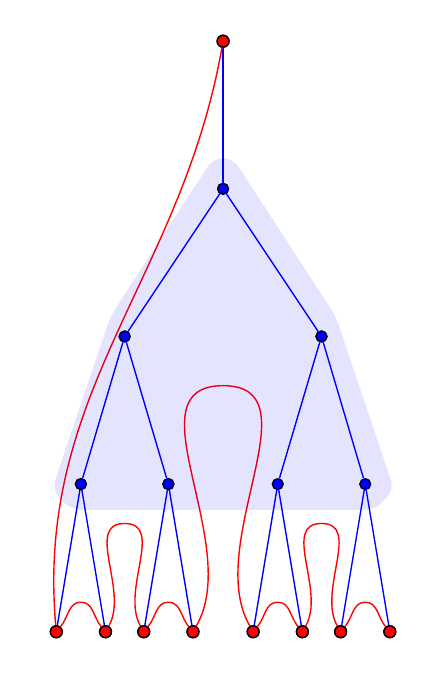}
\vskip -.075cm
\caption{The blue node expanded to the underlying $H$.}
        \label{fig:monochromcompcurved}
    \end{subfigure}
\hskip 0.35125cm
    \begin{subfigure}[t]{5.2cm}
        \centering
            \includegraphics[width=4cm]{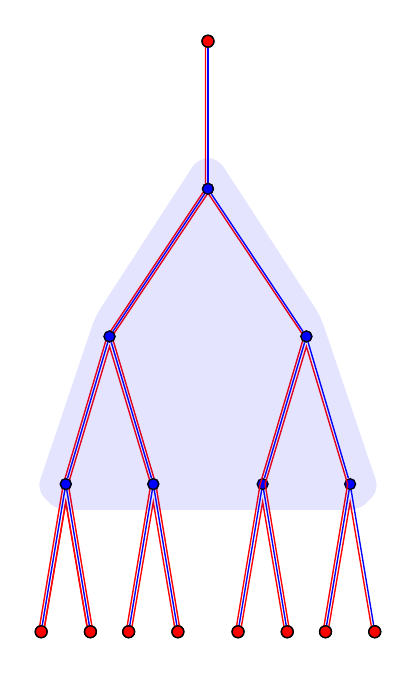}
\vskip -.075cm
\caption{The path's edges expanded to subpaths in the underlying graph.}
        \label{fig:monochromcompstraight}
    \end{subfigure}
\caption{Example portion of $T_r$ visiting a monochrome (blue) component $H$ (shaded) and edges incident to $H$. $T_r$ twice once for all edges except those on the path from the upper red node to the lower right node.}\label{fig:monochromcomp}
\end{figure}

\begin{proof}
%
%
%
((a) and (b).)
If each of the walks with shortcuts performed in the proof of Lemma~\ref{lem:2submsts} were {\em closed} walks, returning to their starting nodes, then each edge would be paid for three times. But the walks do not need to return to their starting nodes, they only have to visit all nodes. Therefore we can (among other more complicated options) choose two leaves as start and end nodes and pay only {\em twice} for the edges on the path between them.\footnote{In fact, we can avoid triple payment of additional edges, within and potentially incident to {\em every} shrunken node.}
 For a shrunken node $v_H$ resulting from a monochromatic component $H$ (of color, say, blue, and whose ``root'' is its node closest to $v^*$), let the {\em leafed monochromatic component} $\hat H$ be the union of $H$ and any outgoing edges incident to $H$, i.e., edges connecting $H$ to red nodes, {\em except} (if $H$ does not include $v^*$) for the edge incident to $H$'s root on the path to $v^*$. Then we can avoid $T_r$'s double payment on the edges on any chosen root-to-leaf path within $\hat H$ (see Fig.~\ref{fig:monochromcomp});
In particular, we can choose two leaf nodes whose path includes $e_L,$, $e_\times$, and $e_R$, thus avoiding the third charge for those edges. This strategy will yield {\em a} tree spanning blue nodes and {\em a} tree spanning red nodes satisfying inequalities (a) and (b). Therefore MSTs of the blue nodes and the red nodes, respectively, will satisfy them as well.

((c) and (d).) Again choose two leaves as start and end nodes of a path, this time including $e_\times$ and $e_R$. All edges on the portion of this path within $T_R$ will be paid for only twice, but because $T_L$ is monochromatic, $e_\times$ and all edges within $T_L$ will already be paid for only once.
\end{proof}

\fi

\ifcccg
\else
\begin{figure*}[t!]
\center
    \begin{subfigure}[t]{8.028388cm}
        \centering
            \includegraphics[width=7.5cm]{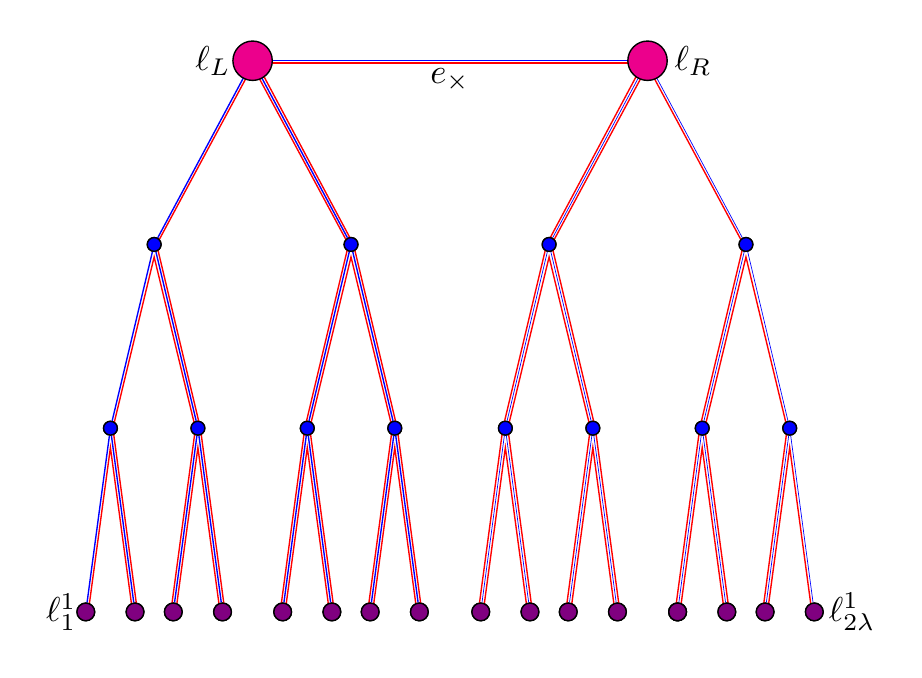}
\vskip -.2cm
            \caption{Coloring with one node red and one blue at every leaf point, and all nodes at non-leaf/non-$\ell_L/\ell_R$ points blue (and thus $(\n+1)/2$ nodes red and $\n-1$ blue at each of $\ell_L$ and $\ell_R$). Results in edges on the path from $\ell_1^1$ to $\ell_{2\lambda}^1$ paid for once by $ALG_b$ and once by $ALG_r$, and all others paid for once by $ALG_b$ and {\em twice} by $ALG_r$, and so $c(ALG_b) = c(T)$ and $c(ALG_r) \approx 2c(T)$.}
        \label{fig:tightmstbad}
    \end{subfigure}
\hskip .35cm
    \begin{subfigure}[t]{8.028388cm}
        \centering
            \includegraphics[width=7.5cm]{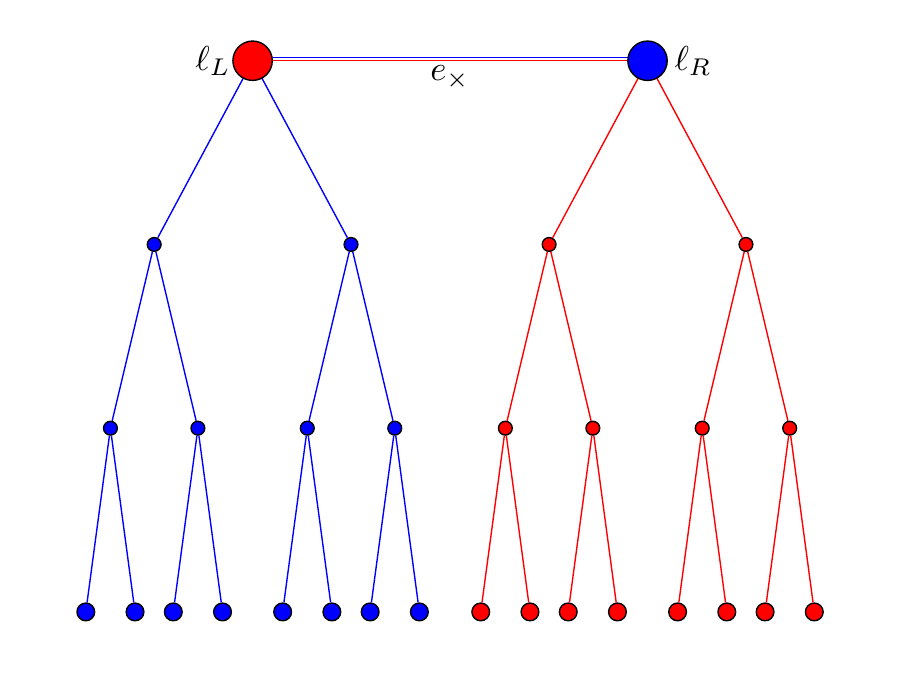}
\vskip -.2cm
\caption{Coloring with all nodes $\ell_L$ red and all at $\ell_R$ blue (and thus all other nodes in $V_L$ blue and all others in $V_R$ red). Results in $e_\times$ paid for once by $OPT_b$ and once by $OPT_r$, and all other edges paid for once total, and so $c(OPT_b) = c(OPT_r) \approx c(T)/2$.}
        \label{fig:tightmstgood}
    \end{subfigure}
\caption{2-MST instance achieving Algorithm~\ref{alg:oumst}'s approximation factor 3 for min-sum and 4 for min-max (and the tightness of Lemma~\ref{lem:2submsts}'s inequalities (a) and (b)), drawn with two colorings. Its $2n=2(3\n-1)$ {\em nodes} are {\em (co-)located} at the $2\n$ {\em points} shown.
%
%
Two nodes are co-located at each of the $2\lambda=\n+1$ leaf points, one at each non-leaf/non-$\ell_L/\ell_R$ point, and $3\lambda-2=(3\n-1)/2$ at each of $\ell_L$ and $\ell_R$. Each node pair has one node at $\ell_L$ (or $\ell_R$) and one node at a descendent point of $\ell_L$ in $T_L$ (respectively, of $\ell_R$ in $T_R$).}\label{fig:tightmstgoodbad}
\end{figure*}
\fi

\begin{proposition}\label{obs:2submststight}
There exist families of graphs showing that bounds (a) and (b) of Lemma~\ref{lem:2submsts} are (simultaneously) tight.
\end{proposition}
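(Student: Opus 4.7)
The plan is to exhibit, for each $\mathfrak{n}$, a metric-weighted instance whose underlying MST $T$ is a symmetric ``double broom'' obtained by joining two identical subtrees $T_L, T_R$ (each rooted at a ``hub'' point $\ell_L$, $\ell_R$) by a bridge edge $e_\times = \{\ell_L, \ell_R\}$, together with a particular 2-coloring under which $c(T_b)+c(T_r)$ and $\max\{c(T_b), c(T_r)\}$ approach $3c(T)$ and $2c(T)$ respectively as $\mathfrak{n}\to\infty$. Concretely, I would take each of $T_L, T_R$ to be a balanced binary tree with $\lambda = (\mathfrak{n}+1)/2$ leaves and unit-length edges, extend the weights to a metric by using shortest-path distances in $T$, and place nodes at the $2\mathfrak{n}$ points as in Figure~\ref{fig:tightmstgoodbad}: two co-located nodes at each leaf point, one at each internal (non-hub) point, and $3\lambda - 2$ nodes at each hub $\ell_L, \ell_R$. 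The point is that the hubs carry enough ``mass'' that the MSTs of the two color classes must include copies of a large portion of $T$.

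Next I would verify the MST costs for the ``bad'' coloring in Figure~\ref{fig:tightmstbad}, where the two co-located nodes at every leaf point split into one red and one blue, all non-leaf/non-hub nodes are blue, and the hub multiplicities split into $(\mathfrak{n}+1)/2$ red and $\mathfrak{n}-1$ blue each. The blue class contains every point of $T$ (because every internal point and every leaf point has at least one blue node), so its MST can reuse every edge of $T$ exactly once, giving $c(T_b) = c(T)$. The red class contains only the leaf points and the hubs, so in order to connect the two ``combs'' of red leaves to each other and across the bridge, any spanning tree on the reds must traverse, via the metric, the entirety of each of $T_L$ and $T_R$ and the bridge; by a standard doubling/shortcut argument, the cheapest way to do this uses every non-bridge edge of $T$ twice and the bridge once, giving $c(T_r) \to 2c(T) - O(1)$. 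Summing, $c(T_b)+c(T_r) \to 3c(T)$ and $\max\{c(T_b), c(T_r)\} \to 2c(T)$ simultaneously, as needed.

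Finally, I would confirm that the construction genuinely respects the ``pairs'' constraint---i.e., that each node's color can be realized via a valid pair assignment---by taking each node pair to consist of one node at a hub and one node at a descendent leaf/internal point on the opposite side, so that the bad coloring is a legal red/blue split of the $n$ pairs. For metric compliance it suffices to let $w(u,v)$ equal the $T$-distance between their points, which is trivially a metric.

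The main obstacle, as so often with tightness examples, is forcing both (a) and (b) to be tight \emph{simultaneously} rather than only one at a time: a symmetric bad coloring would split the overhead evenly between $T_b$ and $T_r$ and yield $\max \approx 1.5\, c(T)$, losing tightness in (b). The asymmetric coloring above deliberately makes one MST essentially equal to $T$ and piles all of the doubling onto the other, which is exactly what is needed to saturate both bounds at once; checking that no cheaper red spanning tree exists (using the metric's triangle inequality) is where one must be slightly careful.
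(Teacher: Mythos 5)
Your proposal is correct and matches the paper's own construction essentially exactly: two unit-weight full binary trees joined by a slightly heavier bridge, with doubled nodes at leaves, single nodes at internal points, and heavy multiplicities at the two hubs, followed by the same asymmetric coloring that forces $c(T_b)=c(T)$ and $c(T_r)\to 2c(T)$. The only cosmetic difference is that you write the red tree's savings as $O(1)$ where the paper notes it is $\Theta(\log \mathfrak{n})$ (the length of a leaf-to-leaf path), which does not affect the limiting ratios.
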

\ifcccg

Then we refine the proof of this key lemma to improve the combined cost of the two trees slightly, reducing it by the weight of three heavy edges.
\else
\begin{proof}
We construct a graph as follows. First consider a set of $2\n$ points $\V=\V_L \cup \V_R$ in a metric space, arranged in the form of two full binary trees $\T_L,\T_R$ (with root points $\ell_L,\ell_R$, respectively, and each with $\lambda$ leaf points and $\n=2\lambda-1$ node points overall; see Fig.~\ref{fig:tightmstgoodbad}), in the sense that $c(\{\ell,\ell'\})=1$ for every edge $\{\ell,\ell'\} \in \T_L \cup \T_R$. Let $c(e_\times)=1+\epsilon$, where $e_\times=\{\ell_L,\ell_R\}$, and let $\T = \T_L \cup \{e_\times\} \cup \T_R$. Set distances between all other pairs of points of $\V$ equal their path distances in $\T$.

Now we define a metric-weighted graph on $2(3\n-1)$ nodes $V$, which are located at points of $\V$ as follows. Two nodes are co-located at each of $\T$'s $2\lambda=\n+1$ leaf points; one node is located at each of $\T$'s $2\lambda-4=\n-3$ non-leaf/non-root points; finally, $3\lambda-2=(3\n-1)/2$ nodes are co-located at each of $\ell_L$ and $\ell_R$. An MST $T$ on $V$ will pay (by construction) once for each edge of $\T$, for a total cost (ignoring the additional $\epsilon$ in $c(e_\times)$) of $4\lambda-2 = 2\n-1$.



Consider the following coloring: at each leaf point one node is red and one is blue, each non-leaf/non-root point's node is blue, and half of $\ell_L$'s and $\ell_R$'s nodes are red and half are blue.



Then an MST $ALG_b$ of the blue nodes will pay once for each edge of $\T$, totaling $2\n-1$, and an MST $ALG_r$ of the red nodes will pay {\em twice} for each edge of $\T$, except for those on some longest path between nodes at leaf pointss, say, from $\ell_1^1$ to $\ell_{2\lambda}^1$, each of which it will only pay for once, and so
\begin{equation}\label{eq:algbrbounds}
c(ALG_b) = c(T)\text{,  ~and~  } c(ALG_r) = 2c(T)-\Theta(\log \n) \approx 2c(T).
\end{equation}
Thus we conclude: ${c(ALG_b)+c(ALG_r) \over c(T)} \to 3$ and ${\max\{c(ALG_b),c(ALG_r)\} \over c(T)} \to 2$.
\end{proof}

\fi

\ifcccg
Finally 
\else
\subsection{Min-sum/min-max 2-MST: algorithm}
Now
\fi
we analyze 
Algorithm \ref{alg:oumst},
which forms trees $T_L,T_R$ by deleting a max-weight edge $e_\times$ (of weight $w_\times$) from an MST $T$ computed on the $2n$ nodes, and then colors all ``lone'' nodes appearing without their partners in $T_L$ blue and all lone nodes in $T_R$ red, and assigns arbitrary distinct colors to all other node pairs.


\ifcccg
\else

The proof analyzes three cases, depending on whether one, both, or neither $T_L,T_R$ contains a pair, the first two cases of which imply that $OPT$ must cross between $T_L$ and $T_R$ at least once or twice, respectively. The challenge is that $c(OPT)$ is lower-bounded by $c(T_L)+c(T_R)$ but not by  $c(T)=c(T_L)+w_\times+c(T_R)$.
We upper-bound $ALG$ by carefully applying Theorem~\ref{thm:2submsts} to $ALG_b+ALG_r$,
and we obtain a lower bound on $c(OPT)$ including $w_\times$ or $2w_\times$, permitting the two bounds to be compared, by subtracting max-weight edges from one or both sides.

The entities defined in the following definitions will be used throughout the rest of the subsection.
\begin{definition}
Let $V$ be a set of $n$ pairs of nodes $\{p_i,q_i\}$ of a metric-weighted graph,
and let $T$ be an MST on $V$.
%
%
Let $e_\times=\{v_L,v_R\}$ be a max-weight edge in $T$, with weight $w_\times$, and let $T_L,T_R$ be the trees (on nodes $V_L,V_R$, respectively) obtained by deleting $e_\times$ from $T$, where $v_L \in T_L$ and $v_R \in T_R$.
Let $e_L,e_R$ be max-weight edges (of weights $w_L,w_R$) be max-weight edges of $T_L$ and $T_R$, respectively. Let $T_L^- = T_L - \{w_L\}$ and $T_R^- = T_R - \{w_R\}$.
\end{definition}


\begin{definition}
Let $OPT$ be some particular optimal solution, and let $OPT_L$ and $OPT_R$ be the portions of $OPT$ induced by $V_L$ and $V_R$, respectively.
Let $ALG_L$ and $ALG_R$ be the portions of Algorithm \ref{alg:oumst}'s solution induced by $V_L$ and $V_R$, respectively.
\end{definition}

\begin{definition}
Say that any edge $e \in E$ is a {\em cross-edge} if $e$ has one node in $V_L$ and one node in $V_R$.
Say that a tree {\em contains a pair} if it contains both $p_i$ and $q_i$ for some $i$. Say that a node is a {\em lone node} if it lies in {\em one of} $T_L,T_R$, and its partner lies in {\em the other}.
\end{definition}

\begin{lemma}\label{lem:w}
If exactly one of $T_L,T_R$ contains a pair, then $OPT$ must contain a cross-edge of weight at least $e_\times$. If both $T_L$ and $T_R$ contain a pair, then $OPT$ must contain at least two cross-edges of weight at least $e_\times$, one in $OPT_L$ and one in $OPT_R$.
\end{lemma}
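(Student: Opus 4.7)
The plan is to reduce both claims to one common observation: by the MST cycle property, every cross-edge has weight at least $w_\times$. Indeed, for any edge $e=\{u,v\}$ with $u \in V_L$ and $v \in V_R$, the unique $T$-path from $u$ to $v$ must traverse $e_\times$ (removing $e_\times$ disconnects $V_L$ from $V_R$), so the cycle formed by $e$ together with that path contains $e_\times$; since $e_\times$ is a heaviest edge of $T$, the cycle property forces $c(e) \geq w_\times$.

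For the first claim, assume without loss of generality that $T_L$ contains a pair $\{p_i,q_i\}$ while $T_R$ contains none. In $OPT$ the two members of this pair receive opposite colors, so both the blue and the red spanning tree of $OPT$ contain a node in $V_L$. Since $V_R$ is nonempty (it contains $v_R$), whichever color $v_R$ receives, that color's spanning tree has a node in $V_R$ in addition to one in $V_L$; being connected, it must include at least one cross-edge, which by the preceding observation has weight at least $w_\times$.

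For the second claim, apply the pair-splitting argument on both sides: a pair inside $T_L$ forces $V_L$ to contain nodes of both colors, and a pair inside $T_R$ forces $V_R$ to contain nodes of both colors. Each of $OPT$'s two spanning trees therefore has at least one node in $V_L$ and at least one node in $V_R$, so each must use at least one cross-edge. This yields two cross-edges of weight $\geq w_\times$---one in the blue tree of $OPT$ and one in the red tree---which I interpret as the intended meaning of ``one in $OPT_L$ and one in $OPT_R$'' (a cross-edge bridges $OPT_L$ and $OPT_R$ rather than being contained in either induced subgraph).

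I do not foresee a real technical obstacle: the whole argument is a pigeonhole step (how forced pairs distribute over colors) on top of the MST cycle property. The only detail to check is that the two cross-edges produced in the second case are genuinely distinct, which is immediate because the blue and red spanning trees of $OPT$ are edge-disjoint.
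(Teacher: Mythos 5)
Your proof is correct and follows essentially the same route as the paper's: a pair inside $T_L$ (resp.\ $T_R$) forces both colors to appear in $V_L$ (resp.\ $V_R$), so the relevant connected component(s) of $OPT$ must use a cross-edge, and every cross-edge has weight at least $w_\times$ by the standard MST cut/cycle property. You also make explicit two points the paper glosses over---the justification that cross-edges weigh at least $w_\times$, and the reading of ``one in $OPT_L$ and one in $OPT_R$'' as one cross-edge in each color component (which is how the lemma is actually invoked later)---both of which are the right calls.
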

\begin{proof}
If only one of them contains a node pair, say, $V_L$ contains $\{p_L,q_L\}$, then $p_L$ and $p_R$ must be in different components of $OPT$. Thus $OPT$ must contain a path $P_L$ connecting $(p_L,...,T_R)$ or $(q_L,...,T_R)$, and so $P_L$ must contain a cross-edge of weight $\ge w_\times$.

If $V_L$ and $V_R$ each contain pairs, then the two members of each of these two pairs, say, $\{p_L,q_L\}$ and $\{p_R,q_R\}$, must be in different components of $OPT$ (which has exactly two components), $OPT$ must contain two {\em vertex-disjoint} paths $P_1$ and $P_2$ either connecting $(p_L,...,p_R)$ and $(q_L,...,q_R)$ or $(p_L,...,q_R)$ and $(q_L,...,p_R)$. Then each of these paths connects a node in $V_L$ to a node in $V_R$, and thus contains at least one cross-edge, both of weight $\ge w_\times$.
\end{proof}

Now we prove the approximation guarantee.
\fi

\begin{algorithm}[t!]
    \caption{Min-sum/min-max 2-MST approx}
    \label{alg:oumst}
    \begin{algorithmic}[1] 
            \State $T \gets$ an MST on the $2n$ nodes
            \State $\{T_L,T_R\} \gets$ result of deleting a max-weight edge $e_\times$ from $T$
	            \For {each node pair $(p_i,q_i) \in V_L \times V_R$}
	            	\Statex \hskip .52cm color $p_i$ blue and $q_i$ red
		\EndFor
	            \For {each other node pair $(p_i,q_i)$}
	            	\Statex \hskip .52cm assign $p_i,q_i$ arbitrary distinct colors
		\EndFor
	\ifcccg
	\State \textbf{for} $c \in \{b,r\}$: \:$T_c \gets$ a minimum-weight tree spanning the color-$c$ nodes
	\else
	            \For {$c \in \{b,r\}$}
	            	\Statex \hskip .52cm $T_c \gets$ an MST of the color-$c$ nodes
		\EndFor
	\fi
	\State \Return $\{T_b,T_r\}$
    \end{algorithmic}
\end{algorithm}

\begin{theorem}\label{thm:mstapxpf}
Algorithm \ref{alg:oumst} provides a $\StFact$-approximation for min-sum 2-MST.
\end{theorem}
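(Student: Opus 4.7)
My plan is a case analysis based on how many of $T_L, T_R$ contain both members of some pair (zero, one, or two). In each case I upper-bound $c(ALG) = c(T_b) + c(T_r)$ via Theorem~\ref{thm:2submsts} and lower-bound $c(OPT)$ via a structural decomposition of OPT's edge set.

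For the lower bound, I partition OPT's edges into $OPT_L$, $OPT_R$, $OPT_{\text{cross}}$ by whether both endpoints lie in $V_L$, both in $V_R$, or one in each. Since $T_L$ is an MST of $V_L$ and $T_R$ of $V_R$ (standard MST cut argument: swapping a cheaper same-side edge into the full MST would contradict $T$'s optimality), if $OPT_L$ induces $k_L$ components on $V_L$ then $c(OPT_L) \ge c(T_L) - (k_L - 1) w_L$ (the minimum $k_L$-forest on $V_L$ deletes at most $k_L - 1$ edges from $T_L$, each of weight $\le w_L$), and symmetrically for $V_R$. Every cross-edge has weight $\ge w_\times$ by the MST cycle property applied at $e_\times$, and a component count gives $|OPT_{\text{cross}}| = k_L + k_R - 2$. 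Summing yields
\[
c(OPT) \ge c(T_L) + c(T_R) + (k_L - 1)(w_\times - w_L) + (k_R - 1)(w_\times - w_R).
\]
The pivotal observation is that if $V_L$ contains a pair, then OPT's distinct coloring of that pair places its two members into different OPT-components, and restricting to $V_L$-edges shows $k_L \ge 2$; symmetrically for $V_R$.

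In Case A (no pair in either side), Algorithm's coloring is $V_L$ blue, $V_R$ red, so $T_b = T_L$, $T_r = T_R$ and $c(ALG) = c(T) - w_\times$; the LB above with $k_L, k_R \ge 1$ already yields $c(OPT) \ge c(T_L) + c(T_R) = c(ALG)$. In Case B (WLOG only $T_L$ contains a pair), $T_R$ is monochromatic red, so I invoke Theorem~\ref{thm:2submsts}(c) with the $L,R$ roles swapped to bound $c(ALG) \le c(T_R) + w_\times + 3c(T_L) - w_L$; combining with the LB specialized to $(k_L, k_R) = (2, 1)$, the desired inequality $c(ALG) \le 3 c(OPT)$ reduces after cancellation to $w_L \le c(T_R) + w_\times$, which is immediate from $w_L \le w_\times$. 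In Case C (both contain pairs), Theorem~\ref{thm:2submsts}(a) gives $c(ALG) \le 3c(T) - w_L - w_\times - w_R$; the LB specialized to $(k_L, k_R) = (2, 2)$ reduces the desired inequality to $w_L + w_R \le 2w_\times$, immediate from $w_L, w_R \le w_\times$.

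The main obstacle is arriving at a sharp enough lower bound on $OPT$: the bare ``$OPT$ is a $2$-component spanning forest of $V$'' bound $c(OPT) \ge c(T) - w_\times$ is too weak in Case C, and Theorem~\ref{thm:2submsts}(a) alone is too weak in Case B. Recognizing that a pair wholly inside $V_L$ forces an extra component in $OPT_L$ (and symmetrically for $V_R$), and pairing this observation with the tightest applicable version of Theorem~\ref{thm:2submsts} on the ALG side (variant (c) in Case B, variant (a) in Case C), is what allows the three-case algebra to close with ratio exactly $3$.
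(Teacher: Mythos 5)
Your proposal is correct and follows essentially the same route as the paper's proof: the same three-way case analysis on which of $T_L,T_R$ contains a pair, the same use of Theorem~\ref{thm:2submsts}(a) and (c) to bound $c(ALG)$, and the same lower bound on $c(OPT)$ obtained from forced cross-edges of weight $\ge w_\times$ and minimum spanning forests on $V_L,V_R$ (the paper packages the ``pair forces an extra component, hence an extra cross-edge'' step as Lemma~\ref{lem:w}). Your single parametrized inequality in $(k_L,k_R)$ is a slightly cleaner way of handling the ``additional components only increase the bound'' argument that the paper treats informally, but it is the same estimate.
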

%

\ifcccg
\else

\begin{proof}
We analyze three cases, depending on whether one, both, or neither of $T_L,T_R$ contains a 
pair.

\vskip .1cm
\noindent $\bullet$ (1) \textbf{Neither} $T_L$ nor $T_R$ contains a pair.
%
Then all nodes are lone nodes, then the solution is optimal. 

\vskip .1cm
\noindent $\bullet$ (2) \textbf{Both} $T_L$ and $T_R$ contain a pair.
%
%
%
%
%
%
Let $T_L^r$ and $T_L^b$ be MSTs on $T_L$'s red and blue nodes, respectively, and let $T_R^r$ and $T_R^b$ be MSTs on $T_R$'s red and blue nodes, respectively. 

If $v_L$ and $v_R$ are both the same color, say, blue, then edge $e_\times=(v_L,v_R)$ can be used to connect $T_L^b$ and $T_R^b$ with cost $w_\times$, but $e_\times$ cannot, by itself, be used to connect $T_L^r$ and $T_R^r$. What can be said, however, is that the cost of an edge between two red nodes $v_L^r \in V_L$ and $v_R^r \in V_R$ will be upper-bounded by the cost of the edges in {\em the path in $T_L$} from $v_L^r$ to $v_L$ plus $w_\times$ plus the cost of {\em the path in $T_R$} from $v_R$ to $v_R^r$. If $v_L$ and $v_R$ are different colors, say, blue and red, respectively, then $T_L^b$ and $T_R^b$ could be connected using $e_\times$ and a path to a blue node in $T_R$, and $T_L^r$ and $T_R^r$ could be connected using $e_\times$ and a path to a red node in $T_L$.

Now, consider the case of $T_L^r$ and $T_R^r$ when $v_L$ is blue. Consider $T_L$ as a tree rooted at $v_L$, and consider all the blue nodes in $V_L$ that are the {\em first blue nodes encountered} on {\em paths from $v_L$ in $T_L$}. If there is only one such node $T_L^b$ (there must be at least one), then in a solution connecting $T_L^b$ and $T_R^b$ using the path from $T_L^b$ to $v_L$, $ALG_b$ would only pay for this path once. ($ALG_r$ would also pay for it once.) If there are multiple such nodes, then a solution could be chosen in which $ALG_b$ pays twice for {\em all but one} of these paths, paying only once for that one. ($ALG_r$ would pay only once for all of them.)

Regardless of the location of edge $e_L$ within $T_L$, therefore, there will exist trees spanning $V^b$ and $V^r$ that together pay for $e_\times$ twice, pay for $e_L$ at most twice, and pay for all other edges at most thrice (and, similarly, that pay for $e_R$ at most twice).


%
%



Then the cost of the solution will be:
\begin{eqnarray}
c(ALG) &=& c(ALG_b) + c(ALG_r)\nonumber\\
&\le& \StFact c(T) - (w_L + w_\times + w_R)\\
&=& \StFact \Big( c(T_L) + c(T_R)) \Big) + (2 w_\times  - w_L - w_R),\label{eq:algbound}
\end{eqnarray}
where \eqref{eq:algbound} follows from Theorem~\ref{thm:2submsts}(a).

We know that
\begin{equation*}
c(T) - w_\times ~=~ c(T_L)+c(T_R) ~\le~ c(OPT),
\end{equation*}
and by Lemma~\ref{lem:w} we can assume both components $OPT_r,OPT_b$ of $OPT$ contain a cross-edge of weight at least $w_\times$. $OPT$ will be a 2-component spanning forest, and since both $T_L$ and $T_R$ contain a pair, $OPT_L$ and $OPT_R$ 
will each contain at least one fewer edge than $T_L$ and $T_R$, respectively.
First, suppose $OPT_L$ and $OPT_R$ each consist of two components, one blue and one red, i.e., forests with two trees, and $|V_L|-2$ and $|V_R|-2$ edges, respectively. 
But $T_L^-$ is a lightest-weight forest of $|V_L|-2$ edges on $V_L$ (hence $c(OPT_L) \ge c(T_L^-)$) and $T_R^-$ is a lightest-weight forest of $|V_R|-2$ edges on $V_R$ (hence $c(OPT_R) \ge c(T_R^-)$), and so:
%
\begin{eqnarray}
c(OPT) &\ge& c(OPT_L)+ 2w_\times + c(OPT_R) \nonumber\\
&\ge& c(T_L^-) + 2w_\times + c(T_R^-)\nonumber\\
&=& \Big(c(T_L)+c(T_R)\Big) + (2w_\times - w_L - w_R).\label{eq:optbound}
\end{eqnarray}

Second, suppose $OPT_L$ has two blue components rather than one. In this case $OPT_L$ consists of $|V_L|-3$ edges, of total cost at least $c(T_L^-)-w_L$, but now another cross-edge is required, having cost at least $w_\times$, which since $w_\times \ge w_L$ results in a net nonnegative increase in $c(OPT_L)$. More generally, additional components beyond two for either $OPT_L$ or $OPT_R$ would only increase lower bound \eqref{eq:optbound} further.

%

Combining \eqref{eq:optbound} and \eqref{eq:algbound}, we obtain:
\begin{eqnarray*}
{c(ALG) \over c(OPT)} 
&\le& {\StFact \Big( c(T_L) + c(T_R)) \Big) + (2 w_\times  - w_L - w_R) \over \Big(c(T_L)+c(T_R)\Big) + (2w_\times - w_L - w_R)}\nonumber\\
&\le& \StFact. \hskip 2.6cm \footnotesize{(\textit{because } w_\times \ge w_L,w_R)}
\label{eq:bothcontain}
\end{eqnarray*}


\noindent $\bullet$ (3) \textbf{Only} (say) $T_R$ contains a pair, with all nodes in $V_L$ being (say) blue.
Then all the red nodes (and some blues) lie in $V_R$. Then $c(ALG_b) = c(T_L^b)+c(T_R^b)+w_\times$ as before but now $c(ALG_r) = c(T_R^r)$. As in the discussion in case (2) above, regardless of the location of edge $e_R$ within $T_R$, there will exist a trees spanning $V^b$ and $V^r$ that together pay for $e_\times$ twice, pay for $e_R$ at most twice, and pay for all other edges at most thrice (but this time pay for $e_L$ at most only twice).


Then the cost of the solution will be:
%
\begin{eqnarray}
c(ALG) &=& c(ALG_b) + c(ALG_r)\nonumber\\
&\le& c(T_L) + w_\times + \Big(\StFact c(T_R) - w_R\Big)\nonumber\\
&=&  \Big(c(T_L) + \StFact c(T_R)\Big) + (w_\times - w_R),\label{eq:algbound1}
\end{eqnarray}
where \eqref{eq:algbound1} follows from Theorem~\ref{thm:2submsts}(c).

By Lemma~\ref{lem:w} we can assume that $OPT_b$ contains a cross-edge of weight at least $w_\times$. $OPT_L$ contains at least one (blue) component, and $OPT_R$ contains at least two components (one blue and one red). We can assume they contain exactly these many, since as above additional components would only increase the lower bound \eqref{eq:optbound1} further.
Because $T_L$ is an MST on $V_L$ (hence $c(OPT_L \ge c(T_L))$) and $T_R^-$ is a lightest-weight forest of $|V_R|-2$ edges on $V_R$ (hence $c(OPT_R) \ge c(T_R^-)$), we have:
\begin{eqnarray}
c(OPT) ~&\ge&~ c(OPT_L)  + w_\times + c(OPT_R)\nonumber\\
~&\ge&~ c(T_L) + w_\times + c(T_R^-)\nonumber\\
~&\ge&~  \Big(c(T_L)+c(T_R)\Big) + (w_\times - w_R).\label{eq:optbound1}
\end{eqnarray}
Combining \eqref{eq:algbound1} and \eqref{eq:optbound1} we obtain:
%
\begin{eqnarray*}
{c(ALG) \over c(OPT)} &\le& {\Big(c(T_L) + \StFactPar c(T_R) \Big) + (w_\times - w_R) \over \Big(c(T_L)+c(T_R)\Big) + (w_\times - w_R)}\\
&\le& \StFact. \hskip 2cm \footnotesize{(\textit{because } w_\times \ge w_R)}
\end{eqnarray*}
\vskip -.7cm
%
\end{proof}


\fi


\ifcccg
The proof analyzes three cases, depending on whether one, both, or neither $T_L,T_R$ contains a pair, the first two cases of which imply that $OPT$ must cross between $T_L$ and $T_R$ at least once or twice, respectively. The challenge is that $c(OPT)$ is lower-bounded by $c(T_L)+c(T_R)$ but not by  $c(T)=c(T_L)+w_\times+c(T_R)$.
We upper-bound $ALG$ by carefully applying the refinement of Lemma~\ref{lem:2submsts} to $ALG_b+ALG_r$, and we obtain a lower bound on $c(OPT)$ including $w_\times$ or $2w_\times$, permitting the two bounds to be compared, by subtracting max-weight edges from one or both sides.
\else
\fi


This immediately implies that the same algorithm provides $6$-approximation for min-max 2-MST, but we perform a tighter analysis.
\begin{theorem}\label{cor:maxmstapxpf}
Algorithm \ref{alg:oumst} provides a $\mmStFact$-approximation for min-max 2-MST.
\end{theorem}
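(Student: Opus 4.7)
The plan is to follow exactly the three-case analysis used in the proof of Theorem~\ref{thm:mstapxpf} (according to whether neither, exactly one, or both of $T_L,T_R$ contain a node pair), with two substitutions. On the algorithm side, replace parts (a) and (c) of Theorem~\ref{thm:2submsts} by their max counterparts (b) and (d). On the optimum side, use the fact that any feasible 2-forest $(OPT_b, OPT_r)$ satisfies $\max\{c(OPT_b),c(OPT_r)\} \ge \tfrac{1}{2}(c(OPT_b)+c(OPT_r))$, halving the sum lower bounds that were already derived for arbitrary feasible solutions in Theorem~\ref{thm:mstapxpf}'s proof. The factor-$2$ max bound on $ALG$, combined with the factor-$2$ loss from the sum-to-max halving of $OPT$, gives exactly the claimed factor $4$.

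Concretely, in case~2 (both $T_L,T_R$ contain a pair), Theorem~\ref{thm:2submsts}(b) yields
\[\max\{c(ALG_b),c(ALG_r)\} \le 2(c(T_L)+c(T_R)) + (w_\times - w_L - w_R),\]
while the lower bound $c(OPT_b)+c(OPT_r) \ge (c(T_L)+c(T_R)) + (2w_\times - w_L - w_R)$ from the proof of Theorem~\ref{thm:mstapxpf} halves into a lower bound on the min-max optimum; dividing gives a ratio at most $4$ iff $w_L + w_R \le 3 w_\times$, which holds since $e_\times$ is a max-weight edge of $T$. In case~3 (only $T_R$ contains a pair, so all of $V_L$ is monochromatic in $ALG$), Theorem~\ref{thm:2submsts}(d) gives
\[\max\{c(ALG_b),c(ALG_r)\} \le (c(T_L)+2c(T_R)) + (w_\times - w_R),\]
and the halved sum bound gives $OPT^{\max} \ge \tfrac{1}{2}[(c(T_L)+c(T_R)) + (w_\times - w_R)]$; the resulting ratio is at most $4$ iff $w_R \le c(T_L) + w_\times$, again immediate from $w_\times \ge w_R$. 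Case~1 yields an optimal solution as before.

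The main obstacle to be careful about is verifying that the sum lower bounds used in the proof of Theorem~\ref{thm:mstapxpf} were actually derived for an arbitrary feasible solution---relying only on Lemma~\ref{lem:w}'s forced cross-edges and on the fact that an MST-forest is a cheapest forest of its given edge count on its vertex set---rather than secretly exploiting min-sum optimality of $OPT$. Inspection of that proof shows this is indeed the case, so the same bounds apply verbatim to the min-max optimum, and the remainder reduces to the one-line arithmetic checks above.
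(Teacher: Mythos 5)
Your proposal is correct and follows essentially the same route as the paper: the paper likewise reuses the three-case structure of Theorem~\ref{thm:mstapxpf}, upper-bounds $\max\{c(ALG_b),c(ALG_r)\}$ via the max-version (second inequality) of Theorem~\ref{thm:2submsts}, and lower-bounds the min-max optimum by half the min-sum lower bound (the paper phrases this as $c_{mm}(OPT_{mm}) \ge c(OPT_{mm})/2 \ge c(OPT)/2$ and then invokes the sum bound \eqref{eq:optbound}, which is the same halving step you perform directly on the feasible min-max solution). Your arithmetic checks in cases 2 and 3 match the paper's (the paper only writes out the ``both'' case and declares the other similar), so no substantive difference remains.
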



\ifcccg
\else
\begin{proof}
Let $OPT_{mm}$ be an optimal max-min solution, and let $c_{mm}(\cdot)$ denote the max-min 2-MST cost function. First, observe that
\begin{equation}\label{eq:mmmst1}
c_{mm}(OPT_{mm}) ~\ge~ c(OPT_{mm})/2 ~\ge~ c(OPT)/2.
\end{equation}
Therefore in the ``neither contains a pair'' case we obtain:
\begin{equation*}
c_{mm}(ALG) ~\le~ c(ALG) ~\le~ 2c_{mm}(OPT_{mm}).
\end{equation*}

In the ``both contain a pair'' case (we omit the ``only one contains a pair'' case, which is similar), we can, applying the second inequality of Theorem~\ref{thm:2submsts} to $T_L$ and $T_R$, similarly to the derivation of \eqref{eq:algbound}, obtain:
\begin{equation}
c_{mm}(ALG) ~\le~ 2\Big(c(T_L)+c(T_R)\Big) + (2w_\times-w_L-w_R).\label{eq:mmmst2}
\end{equation}
Combining \eqref{eq:mmmst1} and \eqref{eq:mmmst2}, we obtain:
\begin{eqnarray*}
{c_{mm}(ALG) \over c_{mm}(OPT_{mm})} &\le& {2\Big(c(T_L)+c(T_R)\Big) + (2w_\times-w_L-w_R) \over c(OPT)/2}\\
&\le& 2{ 2\Big(c(T_L)+c(T_R)\Big) + (2w_\times-w_L-w_R) \over \Big(c(T_L)+c(T_R)\Big) + (2w_\times - w_L - w_R) } \hskip .5cm \footnotesize{(\textit{applying } \eqref{eq:optbound})}\\
&\le& 4.\hskip 4.9cm \footnotesize{(\textit{because } w_\times \ge w_L,w_R)}
\end{eqnarray*}
%
%
\vskip -.55cm
\end{proof}

\fi

Extending Proposition \ref{obs:2submststight}, we obtain:
\begin{proposition}
There exist families of instances showing that the 2-MST min-sum and min-max approximation ratios are both tight.
\end{proposition}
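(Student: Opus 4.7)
The plan is to reuse the family of instances already constructed in the proof of Proposition~\ref{obs:2submststight}, together with the pair placement described in Fig.~\ref{fig:tightmstgoodbad}'s caption: every pair has one node at $\ell_L$ or $\ell_R$ and the other at a descendant point of the same root in $\T_L$ or $\T_R$. On this family I will argue that Algorithm~\ref{alg:oumst} can be forced to produce the bad coloring of Fig.~\ref{fig:tightmstbad}, while a feasible solution realising the coloring of Fig.~\ref{fig:tightmstgood} certifies a much cheaper $OPT$.

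First I would compute an MST $T$ on the $2(3\n-1)$ nodes, identify $e_\times=\{\ell_L,\ell_R\}$ as a maximum-weight edge of $T$, and trace through Algorithm~\ref{alg:oumst}. Because every pair has both members on the same side of $e_\times$, deleting $e_\times$ produces no lone nodes, so the algorithm's first \texttt{for}-loop is vacuous and its second \texttt{for}-loop's ``arbitrary distinct colors'' rule is free to realise exactly the coloring of Fig.~\ref{fig:tightmstbad}. Inequalities~\eqref{eq:algbrbounds} from Proposition~\ref{obs:2submststight} then give
\begin{equation*}
c(ALG_b)+c(ALG_r) \;\approx\; 3c(T) \qquad\text{and}\qquad \max\{c(ALG_b),c(ALG_r)\} \;\approx\; 2c(T).
\end{equation*}
Next I would verify that the coloring of Fig.~\ref{fig:tightmstgood} is also feasible under the same pair structure: in every pair the member at the side's root and the member at an interior descendant automatically receive different colors under the rule ``$\ell_L$'s nodes red, $\ell_R$'s nodes blue, other $V_L$ nodes blue, other $V_R$ nodes red.'' The cost bounds stated in Fig.~\ref{fig:tightmstgood}'s caption then yield $c(OPT_b)=c(OPT_r)\approx c(T)/2$, so that $c(OPT_b)+c(OPT_r)\approx c(T)$ and $\max\{c(OPT_b),c(OPT_r)\}\approx c(T)/2$. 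Dividing gives approximation ratios tending to $3$ (min-sum) and $4$ (min-max) as $\n\to\infty$ and $\epsilon\to 0$, matching the upper bounds of Theorems~\ref{thm:mstapxpf} and~\ref{cor:maxmstapxpf}.

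The main obstacle, and the reason the pair placement in Fig.~\ref{fig:tightmstgoodbad} is chosen precisely as it is, lies in balancing two opposing constraints on the pairing: the algorithm must be free to produce the bad coloring (which forbids any pair from straddling $e_\times$, pushing both members of each pair into the same side), yet the optimum must be near $c(T)/2$ per tree (which requires the $\ell_L$/$\ell_R$-splitting coloring to be a legal partition of every pair, forcing each pair to straddle its own side's root). The specific placement -- one node at the side's root and the other at an interior descendant on the same side -- satisfies both requirements simultaneously, so once this feasibility check is carried out, the rest of the argument is simply a matter of reading off the costs already computed in the proof of Proposition~\ref{obs:2submststight}.
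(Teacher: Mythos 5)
Your proposal is correct and follows essentially the same route as the paper's own proof: reuse the instance family from Proposition~\ref{obs:2submststight}, place each pair with one node at the root point $\ell_L$ (or $\ell_R$) and the other at a descendant point on the same side, observe that the absence of lone nodes lets the algorithm's tie-breaking realise the bad coloring while the root-splitting coloring is a feasible competitor of cost $\approx c(T)/2$ per tree, and read off the ratios $3$ and $4$. (Your min-max ratio of $4$ is in fact the intended conclusion; the paper's final displayed line writes $2$, which appears to be a typo carried over from Proposition~\ref{obs:2submststight}.)
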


\ifcccg
\else
\begin{proof}
Recall the graph constructed in the proof of Proposition~\ref{obs:2submststight}, and suppose that its $2n$ nodes consist of $n$ pairs, in each of which one node is located at $v_L$ (respectively, $v_R$) and the other is elsewhere in $\V_L$ (respectively, $\V_R$).

Now, first notice that the coloring defined in the proof of Proposition~\ref{obs:2submststight} is a valid red-blue coloring: the number of red nodes co-located at $v_L$ (respectively, $v_R$) equals the number of blue nodes located at other points of $\V_L$ (respectively, $\V_R$). Since $c(e_\times)=1+\epsilon$ and all other edges of MST $T$ are unit-weight, $\{v_L,v_R\}$ is the max-weight edge $e_\times$. Then in the resulting $V_L,V_R$, there are {\em no} lone nodes. Therefore the coloring of Proposition~\ref{obs:2submststight} is a coloring that could have been produced by Algorithm~\ref{alg:oumst}'s tie-breaking, justifying the names $ALG_b$ and $ALG_r$ for the two resulting colored trees.

Second, consider the following alternative coloring (see Fig.~\ref{fig:tightmstgood}): color all nodes at $v_L$ red (and thus all others in $V_L$ blue) and all nodes at $v_R$ blue (and thus all others in $V_R$ red), which is also a valid red-blue coloring, and also one that could have been produced by Algorithm~\ref{alg:oumst}'s tie-breaking. Call the two resulting colored trees $OPT_b$ and $OPT_r$. Then observe that $OPT_b$ (respectively, $OPT_r$) will pay once for $e_\times$ and for every edge of $T_L$ (respectively, $T_R$), and so 
\begin{equation}\label{eq:algbrbounds2}
c(OPT_b) = c(OPT_r) \approx c(T)/2.
\end{equation}
%
%
Combining \eqref{eq:algbrbounds} and \eqref{eq:algbrbounds2}, we conclude: ${c(ALG_b)+c(ALG_r) \over c(OPT_b)+c(OPT_r)} \to 3$ and ${\max\{c(ALG_b),c(ALG_r)\} \over \max\{c(OPT_b),c(OPT_r)\}} \to 2$.
\end{proof}
\fi


\subsection{Min-sum/min-max/bottleneck: hardness}

We provide a reduction inspired by the reduction of \cite{dyer1985complexity} from Three-Dimensional Matching to the problem of partitioning a bipartite graph into two connected components, each containing exactly half the vertices.

In our reduction, however, we reduce the traditional 3-SAT problem.



Given the 3-SAT formula, we construct the following graph (see Fig.~\ref{fig:MST}). For each clause, create a path of length $p$. For each variable $x_i$, we create create two nodes, $x_i$ and $\bar x_i$. We also create a path of length $p_b$ called $b$ and a path of length $p_r$ called $r$. From each $x_i$ or $\bar x_i$, we draw an edge to the final nodes of the paths corresponding the clauses that the literal appears in. Finally, from each $x_i$ {\em and} $\bar x_i$, we draw edges to the final nodes paths $b$ and $r$. All the edges defined have 1; all non-defined edges have weight 2. (In all cases when we refer to the ``final'' node of one of these $m+2$ paths, we mean the node with degree $>2$.)

\begin{figure}[t!]
\center
\ifcccg
\vskip -.15cm
\includegraphics[width=4.3125cm]{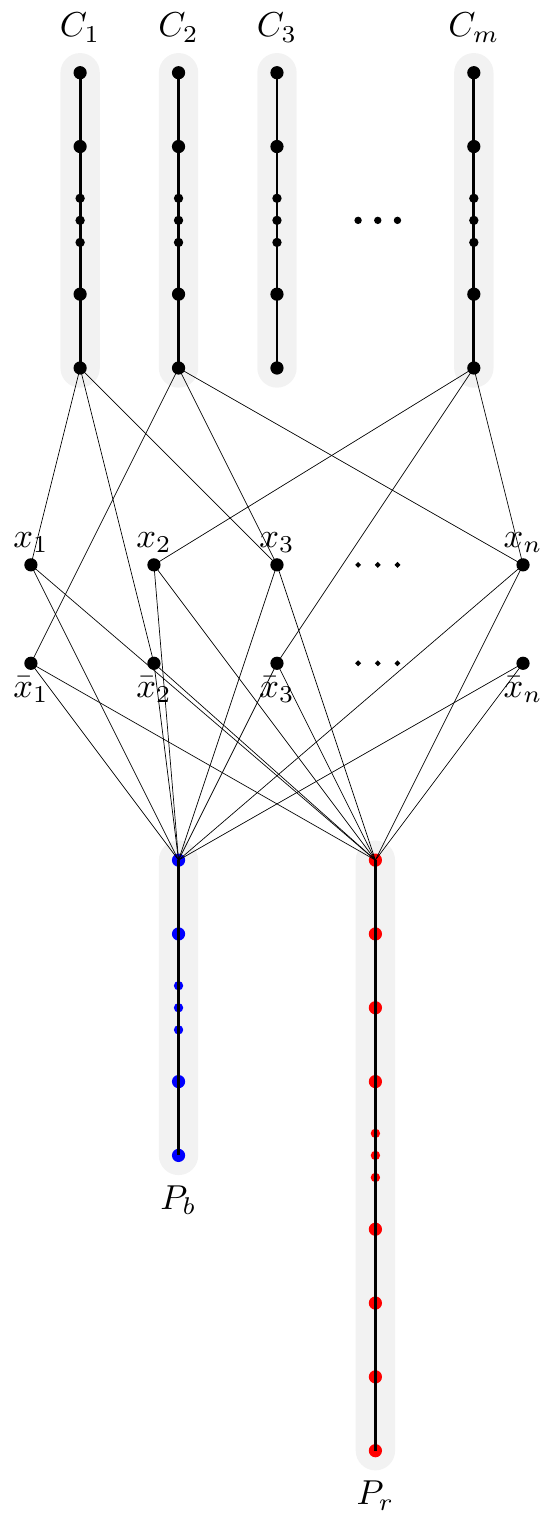}
\vskip -.15cm
\else
\includegraphics[width=5cm]{rb-fig-mst-hard.pdf}
\fi
\vskip -.1cm
\caption{Spanning tree reduction.}\label{fig:MST}
\end{figure}


The path lengths are defined as follows:
\ifcccg
$p_r = (n+1)\cdot n^3 + n +n +1, 
p_b = n^3 + n + 1, 
p = n^3 + 1$.
\else
\begin{eqnarray*}
p_r &=& (n+1)\cdot n^3 + n +n +1\\
p_b &=& n^3 + n + 1\\
p &=& n^3 + 1.
\end{eqnarray*}
\fi


Then the total number of nodes in the graph constructed is:
\ifcccg
$|V| = n \cdot  p + p_b + p_r + m, 
= 2 \cdot (n_r + m)$.
\else
\begin{eqnarray}
|V| &=& n \cdot  p + p_b + p_r + m\nonumber\\
&=& 2 \cdot (n_r + m).\label{eq:numnodes}
\end{eqnarray}
\fi

Finally, we must specify the $\{p_i,q_i\}$ pair relationships of these nodes. Each pair $\{x_i,\bar x_i\}$ is a $\{p_i,q_i\}$ pair. All $p_r$ nodes of path $p_r$ are $p_i$ s. All $p_b$ nodes of path $p$ and all $p$ nodes of path corresponding to an element are $q_i$ nodes. Observe that results in an equal number of $p_i$ and $q_i$ nodes since $p_b + n \cdot p = p_r$.

\begin{lemma}\label{lem:mstwt1iff}
The formula is satisfiable iff the constructed graph admits a 2-MST solution using only weight-1 edges.
\end{lemma}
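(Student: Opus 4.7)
The plan is to prove both directions by arguing that in any weight-1 solution each of the long paths (path $r$, path $b$, and the $n$ clause paths) must be monochromatic, after which the lemma reduces to a simple connectivity condition on the literal colorings.

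For the forward direction, given a satisfying assignment $\tau$, I will color all nodes of path $r$ red; all nodes of path $b$ and of every clause path blue; and for each variable $x_i$, the literal $x_i$ blue iff $\tau(x_i)$ is true (with $\bar x_i$ oppositely colored). Each $\{x_i, \bar x_i\}$ pair is bichromatic by construction, and each remaining pair has its $p_i$-node on path $r$ (hence red) and its $q_i$-node on path $b$ or on a clause path (hence blue), so is bichromatic as well. I will then exhibit a weight-1 spanning tree on each color class: the red tree takes path $r$'s internal edges together with one unit-weight edge from each red literal to path $r$'s literal end; the blue tree takes path $b$'s internal edges, the unit-weight edges from blue literals to path $b$'s literal end, each clause path's internal edges, and, for each clause, one unit-weight edge from the clause path's literal end to any blue literal appearing in that clause---which exists because $\tau$ satisfies the clause.

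For the backward direction, suppose a 2-MST solution using only weight-1 edges exists. I will first argue that path $r$, path $b$ and each clause path is monochromatic. The key observation is that every internal node of such a path has exactly two weight-1 neighbors, namely its two path-neighbors, and the far endpoint (the one without incident literal-edges) has exactly one. If such a path were not monochromatic, the maximal color-run containing the far endpoint would form, in the weight-1 subgraph restricted to that color, a connected component isolated from every other node of that color, precluding a spanning tree of that color. Once monochromaticity is in hand, the pair constraint---all path-$r$ $p_i$-nodes being paired with $q_i$-nodes on path $b$ or clause paths---forces path $r$ to be opposite in color to path $b$ and to every clause path; WLOG path $r$ is red and the rest are blue. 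Finally, in the blue spanning tree each clause path's only outgoing weight-1 edges go to the (at most three) literals appearing in that clause, so connectivity forces at least one such literal to be blue; declaring $x_i$ true iff $x_i$ is blue then yields an assignment that satisfies every clause.

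The main obstacle I expect is the monochromaticity argument: it depends on noting that neither interior nor far-end nodes of these paths have any weight-1 edges to outside nodes, so that the color run containing the far endpoint is truly isolated in the weight-1 subgraph of its color. Once that is pinned down, the pair constraint propagates the coloring rigidly and the clause-connectivity step directly mirrors the definition of 3-SAT satisfiability.
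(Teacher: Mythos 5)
Your proof is correct and follows the same overall blueprint as the paper's: an explicit coloring together with explicit weight-1 spanning trees for the forward direction, and, for the converse, the observation that a weight-1 spanning tree forces each long path to be monochromatic, after which a blue clause path can only attach to the blue tree through a blue literal of that clause. The one genuine difference is how the global coloring is forced once path $r$ is known to be monochromatic (say red). The paper counts: all $p_r$ nodes of path $r$ are red, so by the node-count identity exactly $m$ further nodes are red, and since $m<p_b$ and $m<p$ no other (necessarily monochromatic) path can be red, leaving the $m$ remaining red nodes among the literals. You instead invoke the pairing structure directly: every node of path $b$ and of every clause path is the partner of a node of path $r$, so all of them must receive the opposite color. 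Your route is marginally cleaner in that it needs only the equality $p_b+n\cdot p=p_r$ (which makes the pairing between path $r$ and the remaining paths a bijection) and not the inequalities $m<p_b$, $m<p$; conversely, the paper's counting step does not need path $b$ or the clause paths to be monochromatic a priori. Your isolation argument for the color-run containing a path's far endpoint is precisely the (implicit) justification behind the paper's assertion that ``all $p_r$ nodes of path $r$ must be colored red''; the only point left tacit in your write-up, which is immediate from the pairing, is that each color class contains strictly more nodes than any such run, so the isolated run really does disconnect its color class.
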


\ifcccg
\else

\begin{proof}
First, suppose the formula admits a satisfying assignment. Then we color red the nodes of path $r$, node $x_i$ for each false $x_i$, and node $\bar x_i$ for each true $x_i$, and all other nodes blue. By (\ref{eq:numnodes}), this results in equal numbers of red and blue nodes without coloring both nodes of any pair the same. To obtain the resulting trees, we do the following: for each true $x_i$, delete edges $(x_i,r)$ and $(\bar x_i,b)$, and for each false $x_i$, delete edges $(x_i,b)$ and $(\bar x_i,r)$.

Second, suppose the graph admits a feasible solution. Suppose without loss of generality that the final node of path $r$ is colored red. In this case all $p_r$ nodes of path $r$ must be colored red. By (\ref{eq:numnodes}), exactly $m$ additional nodes must be colored red. Since $m < p_b$ and $m < p$, none of the nodes of path $p_b$ or of the element paths may be colored red, all of which must therefore be colored blue. This leaves $m$ blues and $m$ reds that must have been used to color the $x_i,\bar x_i$ nodes. In order for a clause path to have been colored blue, at least one of its three literals must have been colored blue. Moreover, since each pair of terminals $x_i,\bar x_i$ is a $\{p_i,q_i\}$ pair, we know they are colored different colors. Therefore we can read off a valid satisfying assignment from the colors of the literal nodes.
\end{proof}

\fi

Thus we conclude:

\begin{theorem}
In the special case of metric graphs with weights 1 and 2,  min-sum and min-max, 2-MST are both (strongly) NP-Complete, and bottleneck 2-MST is NP-hard to approximate with factor better than 2.
\end{theorem}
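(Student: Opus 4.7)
The plan is to derive all three statements as immediate consequences of Lemma~\ref{lem:mstwt1iff} combined with two trivial structural invariants about any feasible 2-MST on the constructed graph. First, I would observe that since all edge weights lie in $\{1,2\}$, the triangle inequality holds automatically, so the constructed instance is genuinely metric, and since the weights are $O(1)$ any NP-hardness obtained will automatically be strong NP-hardness. Second, a 2-MST solution is a 2-component spanning forest on $|V|$ nodes and therefore uses exactly $|V|-2$ edges total; moreover the red-blue pairing constraint forces $|V_r|=|V_b|=|V|/2$, so each of the two trees uses exactly $|V|/2-1$ edges, regardless of which valid coloring is chosen.

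For the bottleneck objective, Lemma~\ref{lem:mstwt1iff} directly gives bottleneck $1$ in the satisfiable case and bottleneck $2$ in the unsatisfiable case. Since the only possible edge weights are $1$ and $2$, this $1$-vs-$2$ gap immediately rules out any approximation strictly better than factor $2$ unless $\mathrm{P}=\mathrm{NP}$.

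For min-sum, the satisfiable case yields total cost exactly $|V|-2$ (all weight-$1$ edges by Lemma~\ref{lem:mstwt1iff}), while the unsatisfiable case forces at least one weight-$2$ edge and hence cost at least $|V|-1$. This suffices to conclude NP-completeness of the decision problem.

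For min-max, I would argue from the counting invariant that $\max\{c(T_r), c(T_b)\}\ge |V|/2-1$, with equality iff \emph{both} trees consist entirely of weight-$1$ edges---which, again by Lemma~\ref{lem:mstwt1iff}, is equivalent to satisfiability. The main obstacle I expect is precisely this min-max step: \emph{a priori} an unsatisfiable instance might try to ``hide'' its forced weight-$2$ edges in the smaller tree so that the max is unaffected. The resolution is the pairing invariant: since both trees have exactly $|V|/2$ nodes (hence $|V|/2-1$ edges each), any weight-$2$ edge in either tree pushes that tree's cost strictly above $|V|/2-1$, closing the loophole and preserving the $1$-unit gap needed for NP-completeness.
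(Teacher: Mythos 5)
Your proposal is correct and follows essentially the same route as the paper, which derives the theorem directly from Lemma~\ref{lem:mstwt1iff} (the paper leaves the counting details implicit with a ``thus we conclude''). Your explicit observations---that the $\{1,2\}$ weights make the instance metric and the hardness automatically strong, and that the pairing constraint forces both trees to have exactly $|V|/2-1$ edges so a single weight-$2$ edge creates a unit gap for min-sum, min-max, and a $1$-vs-$2$ gap for bottleneck---are exactly the intended justification.
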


\section{2-TSP}

\subsection{Min-sum/min-max/bottleneck 2-TSP: hardness}

Clearly the min-sum and min-max objectives for 2-TSP are at least as hard to approximate as ordinary TSP in the same metric space (e.g., hard to approximate with factor better than 123/122 \cite{karpinski2015new}, even with edge weights $\{.5,1,1.5,2\}$): to reduce TSP to either of these, simply introduce a co-located pair $\{p_v,q_v\}$ for each node $v$ in the TSP instance.
\ifcccg
\else

\fi
Similarly, the same reduction implies that the bottleneck objective for 2-TSP is at least as hard to approximate as ordinary bottleneck TSP in the same metric space (e.g., hard to approximate with factor better than 2, even with edge weights $\{1,2\}$).


\subsection{Min-sum/min-max 2-TSP: algorithm}

Now we adapt Algorithm \ref{alg:oumst} above to obtain a \msTspFact-approximation algorithm for min-sum and min-max 2-TSP (see Algorithm \ref{alg:outsp}).

The proof again analyzes three cases, depending on whether one, both, or neither $T_L,T_R$ contains a pair.
Unlike with 2-MST, 2-TSP's $c(OPT)$ {\em is} lower-bounded by $c(T)$ in the first two cases, and so we can compare it to the simple upper bound on $c(ALG)$ of $4c(T)$.

\begin{algorithm}[t]
    \caption{Min-sum/min-max 2-TSP approx}
    \label{alg:outsp}
    \begin{algorithmic}[1] 
    \setcounter{ALG@line}{4}
    	\Statex \hskip -.575cm \textit{Identical to Alg,~\ref{alg:oumst}, except with lines 5,6 replaced by:}
%
	\State $C \gets$ a TSP tour, computed from $T$ by edge-doubling
	\ifcccg
	\State \textbf{for} $c \in \{b,r\}$:  \:$C_c \gets$ a tour of the color-$c$ nodes, computed by shortcutting $C$
	\else
	            \For {$c \in \{b,r\}$}
	            	\Statex \hskip .52cm $C_c \gets$ a tour of the color-$c$ nodes, computed by shortcutting $C$
		\EndFor
	\fi
	\State \Return $\{C_b,C_r\}$
%
%
    \end{algorithmic}
\end{algorithm}


\begin{theorem}
Algorithm \ref{alg:outsp} is a \msTspFact-approximation algorithm for min-sum 2-TSP.
\end{theorem}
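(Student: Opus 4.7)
The plan is to show $c(ALG) \le 4\,c(OPT)$ by pairing a uniform algorithmic upper bound with case-specific lower bounds on $c(OPT)$. On the algorithmic side, the Eulerian tour of the doubled $T$ has weight $2c(T)$, so the intermediate tour satisfies $c(C) \le 2c(T)$; shortcutting $C$ further onto each color class only decreases cost (by the triangle inequality), giving $c(C_b), c(C_r) \le 2c(T)$ and hence $c(ALG) \le 4c(T)$.

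For the lower bound on $c(OPT)$, I would follow the same three-case template used for 2-MST, splitting on whether both, exactly one, or neither of $T_L, T_R$ contains a pair. Two uniform tools drive the first two cases. First, $T_L$ and $T_R$ are themselves MSTs of $V_L$ and $V_R$ (standard, since $T$ is an MST and $e_\times \in T$), so by the MST cut property every edge crossing the $(V_L, V_R)$ cut has weight at least $w_\times$. Second, writing $OPT = OPT_b \cup OPT_r$ as its two tours, this is a 2-regular spanning subgraph of $V$; removing any $e_b \in OPT_b$ and any $e_r \in OPT_r$ leaves a 2-component spanning forest of $V$ whose cost is at least the minimum 2-forest cost $c(T) - w_\times$. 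Combining these, $c(OPT) \ge c(T) - w_\times + w_{e_b} + w_{e_r}$ for any such $e_b, e_r$.

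When both $T_L, T_R$ contain a pair, the pair constraint forces each color to appear in both $V_L$ and $V_R$; each cycle $OPT_b, OPT_r$ therefore must cross the cut at least twice, and picking $e_b, e_r$ to be cross-edges in their respective tours gives $c(OPT) \ge c(T) + w_\times$. When only one side (say $V_R$) contains a pair, $V_L$ is monochromatic and $OPT_b$ contains at least the blue node of the $V_R$-pair, forcing it to cross the cut at least twice; picking $e_b$ to be such a cross-edge (with $e_r$ an arbitrary edge of $OPT_r$) still gives $c(OPT) \ge c(T)$. In both of these cases, $c(ALG) \le 4c(T) \le 4\,c(OPT)$.

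The main obstacle is the remaining case, where neither $T_L$ nor $T_R$ contains a pair: the coloring is forced so that $V_L, V_R$ are themselves the two color classes, $OPT$ need not cross the cut, and the lower bound degrades to $c(OPT) \ge c(T_L)+c(T_R) = c(T)-w_\times$, which is too weak for the $4c(T)$ upper bound. I would resolve this via a tighter algorithmic analysis: since a DFS-based Eulerian tour of doubled $T$ traverses the $T_R$-subtree contiguously (after first crossing $e_\times$) before returning to continue on $T_L$, the shortcut tour $C$ visits $V_L$ in the DFS pre-order of $T_L$ and $V_R$ in the DFS pre-order of $T_R$. Hence $C_b$, the further shortcut of $C$ onto $V_L$, is a Hamilton cycle on $V_L$ in $T_L$'s DFS pre-order, inheriting the standard MST-doubling bound $c(C_b) \le 2c(T_L)$; symmetrically $c(C_r) \le 2c(T_R)$. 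Combined with $c(OPT) \ge c(T_L)+c(T_R)$, this yields $c(ALG) \le 2\,c(OPT) \le 4\,c(OPT)$, completing the proof.
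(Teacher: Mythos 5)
Your proof is correct and follows essentially the same route as the paper: the same $c(ALG)\le 4c(T)$ upper bound from edge-doubling and shortcutting, the same three-way case split on which of $T_L,T_R$ contains a pair, the same use of the fact that every $(V_L,V_R)$ cross-edge costs at least $w_\times$, and the same tighter per-side analysis ($c(C_b)\le 2c(T_L)$, $c(C_r)\le 2c(T_R)$) in the ``neither contains a pair'' case, which is exactly the paper's case (1). The one place you genuinely diverge is in assembling the lower bound on $c(OPT)$: the paper examines the induced subgraphs $OPT_L,OPT_R$, counts their components and edges, and handles sub-cases for additional components, whereas you delete one edge from each of the two cycles to get a spanning 2-forest of $V$, lower-bound it by the minimum 2-forest cost $c(T)-w_\times$, and add back the weights of the deleted cross-edges; this is cleaner and sidesteps the paper's component-counting sub-cases. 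One small imprecision to fix: in the ``only $T_R$ contains a pair'' case you assert $V_L$ is monochromatic, which holds for the algorithm's coloring but is not forced on $OPT$ (the $V_L$ nodes are all lone nodes, so $OPT$ may color them arbitrarily); what you actually need, and what does hold for any valid coloring, is that $V_L$ is nonempty and the $V_R$-pair places both colors inside $V_R$, so whichever color appears on a node of $V_L$ yields a cycle straddling the cut and hence at least one cross-edge of weight $\ge w_\times$ to charge.
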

\ifcccg
\else
\begin{proof}
To upper-bound $c(OPT)$, we analyze three cases of the MST $T$, viz., where one, both, or neither of $\{T_L,T_R\}$ contain a pair.

\vskip .1cm
\noindent $\bullet$ (1) \textbf{Neither} $T_L$ nor $T_R$ contains a pair.
%
Then all nodes are lone nodes, i.e., $V_L$ are all blue and $V_R$ are all red. In this case, observe that $C_b$ will actually be the tour that would be obtained by edge-doubling $T_b$, and $C_r$ will be the tour that would be obtained by edge-doubling $T_r$. Thus in this case we have:
\begin{equation*}
c(ALG) ~\le~ c(C_b) + c(C_r) ~\le~ c(C) ~\le~ 2c(T) ~\le~ 2OPT.
\end{equation*}

\noindent $\bullet$ (2) \textbf{Only} (say) $T_R$ contains a pair, with all nodes in $V_L$ being (say) blue. In this case, all red nodes lie within $T_R$.

Since $T_R$ contains a pair, we know that $OPT$ must make at least one roundtrip between $V_L$ and $V_R$, costing at least $2w_\times$, and so $2w_\times \le c(OPT)$.

Now, consider the subgraphs $OPT_L,OPT_R$ of $OPT$ induced by $V_L,V_R$, respectively.

First, suppose $OPT_L$ is connected (in which case $OPT_R$ has exactly two components). Then $OPT_L$ consists of $|V_L|-1$ edges within $V_L \times V_L$ to connect $V_L$ together, and these edges must cost at least $c(T_L)$.

Since $OPT_R$ has two components, consisting of $|V_R|-2$ edges, they must cost at least $c(T_R^-)$.

Then combining the three contributions to the cost, we have $c(T_L) + 2w_\times + c(T_R^-) \le c(OPT)$, which implies:
\begin{equation}\label{eq:mstspoptlb1}
c(OPT) ~\ge~ c(T_L) + w_\times + c(T_R) ~=~ c(T).
\end{equation}

Second, suppose $OPT_L$ has exactly two components. Then it consists of $|V_L|-2$ edges within $V_L \times V_L$, costing at least $T_L^-$, but now also $OPT$ must make a second roundtrip between $V_L$ and $V_R$, costing at least $4w_\times$. The second visit to $V_R$ means that $OPT_R$ consists of $|V_R|-2$ edges, of total cost at least $c(V_R^-)-w_R$. That is, each additional component of $OPT_L$ reduces $c(OPT_L)$ by {\em at most $w_L$}, increases the cost due to cross edges by {\em at least $2w_\times$}, and decreases $c(OPT_R)$ by  {\em at most $w_R$}. Since $2w_\times \ge w_L+w_R$, the case of $OPT_L$ having multiple components would only increase the lower bound \eqref{eq:mstspoptlb1} further.


\vskip .1cm
\noindent $\bullet$ (3) \textbf{Both} $T_L$ and $T_R$ contain a pair. In this case, $OPT$ must make at least two roundtrips between $V_L$ and $V_R$, costing at least $4w_\times$, and so $4w_\times \le c(OPT)$.

First, suppose $OPT_L$ has exactly two components, one blue and one red (in which case $OPT_R$ also has exactly two components). Then $OPT_L$ consists of $|V_L|-2$ edges, and these edges must cost at least $c(T_L^-)$. Similarly, in this case $OPT_R$ consists of $|V_R|-2$ edges, and these edges must cost at least $c(T_R^-)$. Combining the three contributions, we have $c(T_L^-) + 4w_\times + c(T_R^-) \le OPT$, which implies:
\begin{equation}\label{eq:mstspoptlb2}
c(OPT) ~\ge~ c(T_L) + 2w_\times + c(T_R) ~\ge~ c(T).
\end{equation}

Second, suppose $OPT_L$ has an additional component, say, two blue and one red, which implies that $OPT_R$ also has two blue and one red, and that $OPT$ makes a third roundtrip between $OPT_L$ and $OPT_R$. The cost of $OPT_r$ is unchanged, but $OPT_b$ is increased by at least $2w_\times - w_L - w_R \ge 0$. More generally, therefore, the case of additional components would only increase lower bound \eqref{eq:mstspoptlb2} further.

Thus \eqref{eq:mstspoptlb1} holds in both cases (2) and (3).

Now we lower-bound $c(ALG)$ for these cases. 
Since the TSP tour $C$ is obtained from $T$ by edge-doubling, and then $C_b,C_r$ are both extracted from $C$ by shortcutting, we have:
\begin{equation}\label{eq:mstpalgup}
c(ALG) ~\le~ 2 c(C) ~\le~ 2 \cdot 2 c(T) ~=~ 4 c(T).
\end{equation}
Combining \eqref{eq:mstpalgup} and \eqref{eq:mstspoptlb1}, we conclude:
\begin{equation*}
{c(ALG) \over c(OPT)} ~\le~ 4.
\end{equation*}
\vskip -.6cm
\end{proof}
\fi

\begin{theorem}
Algorithm \ref{alg:outsp} is a \mmTspFact-approximation algorithm for min-max 2-TSP.
\end{theorem}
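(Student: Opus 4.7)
The plan is to mimic the min-max 2-MST analysis: bound the algorithm's cost by an expression involving $c(T)$, bound the min-max optimum from below by $c(T)/2$, and combine. The key observation is that Algorithm~\ref{alg:outsp} builds \emph{both} $C_b$ and $C_r$ by shortcutting the same doubled-MST tour $C$, so each of them individually costs at most $c(C) \le 2c(T)$; hence
\[
c_{mm}(ALG) \;=\; \max\{c(C_b),\,c(C_r)\} \;\le\; c(C) \;\le\; 2c(T).
\]

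To lower-bound $c_{mm}(OPT_{mm})$, I would first observe the generic inequality $c_{mm}(OPT_{mm}) \ge c(OPT_{mm})/2 \ge c(OPT)/2$, where $OPT$ is the min-sum optimum. Then I would invoke the case analysis from the preceding min-sum theorem: in case (2) (only one of $T_L,T_R$ contains a pair) and case (3) (both contain pairs), that proof already establishes $c(OPT) \ge c(T)$ (via inequalities \eqref{eq:mstspoptlb1} and \eqref{eq:mstspoptlb2}, which use that $OPT$ must make one or two $V_L$--$V_R$ round trips, each costing at least $2w_\times$). Combining, $c_{mm}(OPT_{mm}) \ge c(T)/2$, and so
\[
\frac{c_{mm}(ALG)}{c_{mm}(OPT_{mm})} \;\le\; \frac{2c(T)}{c(T)/2} \;=\; 4.
\]

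Case (1), where neither $T_L$ nor $T_R$ contains a pair, needs a separate (and easier) argument because there the lower bound $c(OPT) \ge c(T)$ is unavailable. Here $V_L$ is entirely one color and $V_R$ the other, so $C_b$ is simply the tour obtained by edge-doubling $T_L$ and $C_r$ by edge-doubling $T_R$; hence $c(C_b)=2c(T_L)$ and $c(C_r)=2c(T_R)$. Since an MST is a lower bound on the TSP tour of the same point set, $c_{mm}(OPT_{mm}) \ge \max\{c(T_L),c(T_R)\}$, and $c_{mm}(ALG) = \max\{2c(T_L),2c(T_R)\} \le 2 c_{mm}(OPT_{mm})$, which is even within factor $2$.

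I do not anticipate a serious obstacle: the whole argument is a short composition of the doubled-tour upper bound, the $c(OPT)/2$ min-max lower bound, and the case-wise lower bounds on $c(OPT)$ already extracted in the min-sum proof. The only mild subtlety is remembering that in case (1) one cannot use $c(OPT)\ge c(T)$ (since $e_\times$ need not appear in any feasible solution when no pair crosses), which is why that case is handled via the direct MST-vs-TSP bound on each color class rather than via the doubled-tour bound.
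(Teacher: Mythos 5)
Your route is essentially the paper's: upper-bound $c_{mm}(ALG)$ by $2c(T)$ because both $C_b$ and $C_r$ are shortcuts of the single doubled-MST tour $C$, lower-bound $c_{mm}(OPT_{mm})$ by $c(OPT)/2$ and then by $c(T)/2$ via the round-trip arguments already established in the min-sum analysis (inequalities \eqref{eq:mstspoptlb1} and \eqref{eq:mstspoptlb2}), and divide. That part is fine, and you are right that case (1) needs separate treatment since $c(OPT)\ge c(T)$ is not available there.

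The one step that does not hold as written is your case (1) lower bound. In that case every pair is split between $V_L$ and $V_R$, so the optimum is free to choose, for each pair, \emph{which} endpoint is blue; its blue class $B$ need not be $V_L$. The inequality ``MST lower-bounds the tour on the same point set'' therefore gives $c_{mm}(OPT_{mm}) \ge \max\{c(\mathrm{MST}(B)),c(\mathrm{MST}(R))\}$ for $OPT$'s \emph{own} color classes, not $\ge \max\{c(T_L),c(T_R)\}$, and your claimed factor $2$ for this case does not follow. The repair is immediate: $OPT$'s two cycles together contain a spanning $2$-forest of $V$, and $T_L \cup T_R$ is a minimum such forest, so $c(OPT) \ge c(T_L)+c(T_R)$ and hence $c_{mm}(OPT_{mm}) \ge \bigl(c(T_L)+c(T_R)\bigr)/2$; combined with $c_{mm}(ALG) = \max\{2c(T_L),2c(T_R)\} \le 2\bigl(c(T_L)+c(T_R)\bigr)$ this still yields the required factor $4$ in case (1). (The paper itself glosses over this point, asserting $c_{mm}(OPT_{mm}) \ge c(T)/2$ uniformly, which is likewise only literally justified when at least one of $T_L,T_R$ contains a pair; your instinct to isolate case (1) is sound, only the bound you used there needs the adjustment above.)
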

\ifcccg
\else
\begin{proof}
Let $OPT_{mm}$ and $c_{mm}(\cdot)$ be the optimal solution and cost function for min-max 2-TSP, respectively. Then $c_{mm}(OPT_{mm}) \ge c(OPT_{mm})/2 \ge c(OPT)/2$, and in particular, $c_{mm}(OPT_{mm}) \ge c(T)/2$.

Since the blue and red contributions to $c(ALG)$ were both upper-bounded by $2c(T)$ in all three of the ``neither,'' ``only,'' and ``both'' cases, we have $c_{mm}(ALG) \le 2c(T)$. Thus again we conclude:
\begin{equation*}
{c_{mm}(ALG) \over c_{mm}(OPT)} ~\le~ 4.
\end{equation*}
\vskip -.75cm
\end{proof}
\fi


\ifcccg
\else
\begin{figure*}[t!]\label{fig:tspstight}
\center
    \begin{subfigure}[t]{8.028388cm}
        \centering
            \includegraphics[width=7.5cm]{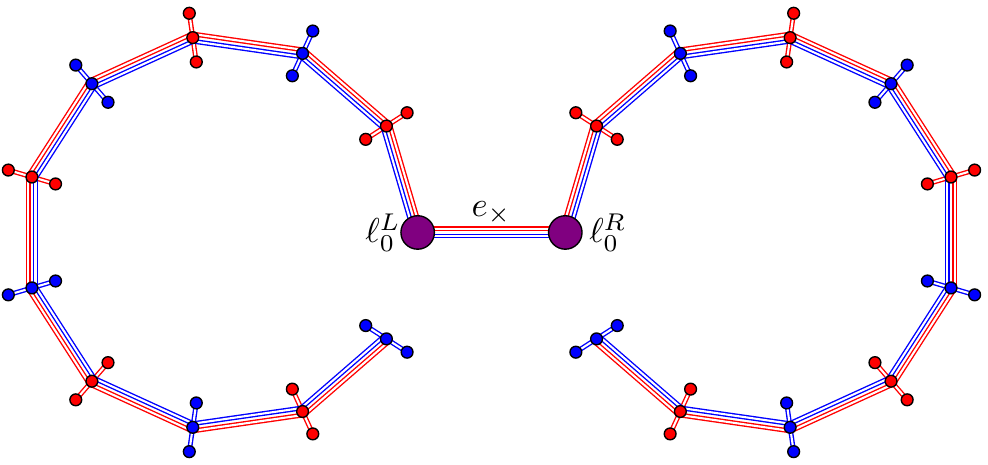}
\vskip .1cm
            \caption{Coloring with degree-4 nodes (and their leaves) alternating between red and blue (in which case half the nodes co-located at $\ell_0^L$ and $\ell_0^R$ are blue, and half red). Results (when tour $C$ visits just one of each degree-4 node's leaves before advancing to the next) in every edge paid for {\em twice} by $ALG_b$ and {\em twice} by $ALG_r$, and so $c(ALG_b) = c(ALG_r) \approx 2c(T)$.}
        \label{fig:tighttspbad}
    \end{subfigure}
\hskip .35cm
    \begin{subfigure}[t]{8.028388cm}
        \centering
            \includegraphics[width=7.5cm]{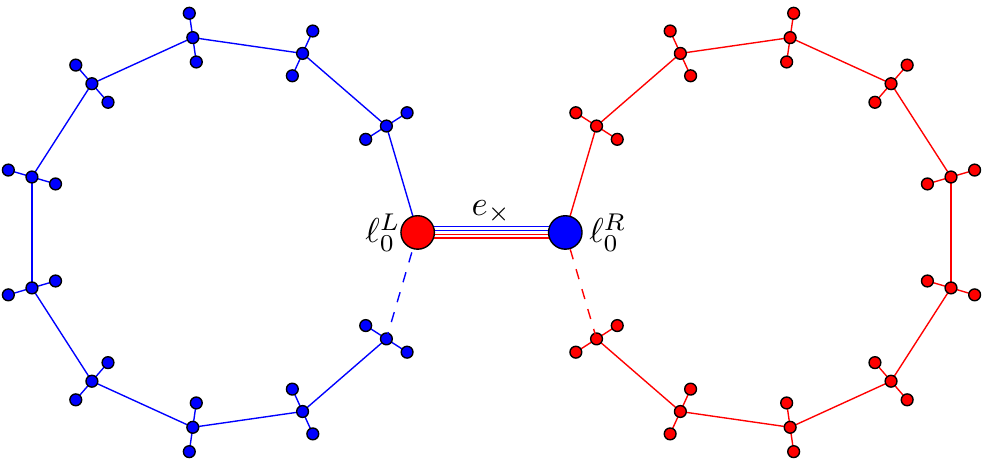}
\vskip .1cm
            \caption{Coloring with all nodes co-located at $\ell_0^L$ red and all those at $\ell_0^R$ blue (in which case all others in $T_L$ are blue and all others in $T_R$ are red). Results (when tour $C$ immediately visits both of each degree-4 node's leaves upon reaching it) in $e_\times$ paid for once by $OPT_b$ and once by $OPT_r$, and all other edges of $T$ (plus the two edges shown dashed) paid for once total, and so $c(OPT_b) = c(OPT_r)\approx c(T)/2$.}
        \label{fig:tighttspgood}
    \end{subfigure}
\caption{
2-TSP instance achieving Algorithm~\ref{alg:outsp}'s approximation factor 4 for min-sum and min-max, drawn with two colorings. Its $2n=12\lambda$ {\em nodes} are {\em (co-)located} at the $2\n=2(3\lambda+1)$ {\em points} shown. $3\lambda$ nodes are co-located at each of $\ell_0^L$ and $\ell_0^R$, and one node is located at every other point. Each node pair has one node at $\ell_0^L$ (or $\ell_0^R$) and one node at another point of $T_L$ (respectively, $T_R$).}
%
\label{fig:tighttspgoodbad}
\end{figure*}
\fi

\begin{proposition}\label{obs:tspstight}
There exist families of instances
showing that the 2-TSP min-sum and min-max approximation factors are both tight.
\end{proposition}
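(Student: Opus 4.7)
The plan is to construct a 2-TSP instance whose structure forces Algorithm~\ref{alg:outsp}'s edge-doubled tour to be unable to amortize its shortcuts across the two colors, while simultaneously admitting a much cheaper two-tour solution under a different (but equally valid) tie-breaking. Concretely, as in Figure~\ref{fig:tighttspgoodbad}, I would take $T$ to consist of two caterpillar trees $T_L,T_R$ joined by the single edge $e_\times = \{\ell_0^L,\ell_0^R\}$, each caterpillar having $\lambda$ spine points arranged in a path from a root point $\ell_0^L$ (respectively $\ell_0^R$), with two pendant leaf points hanging off each spine point. I would give every tree edge weight $1$ and $e_\times$ weight $1+\epsilon$, and complete to a metric by taking shortest-path distances. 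I would co-locate $3\lambda$ nodes at each of $\ell_0^L,\ell_0^R$ and place a single node at each of the other $3\lambda$ points of each caterpillar, yielding $2n=12\lambda$ nodes total; pairs $\{p_i,q_i\}$ are defined so that each straddles $\ell_0^L$ (or $\ell_0^R$) and one other point of the same caterpillar, which is feasible because the counts match.

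First I would show that the ``bad'' coloring of Figure~\ref{fig:tighttspbad} is a possible output of Algorithm~\ref{alg:outsp}. The $\epsilon$-gap forces $e_\times$ to be the unique heaviest edge of $T$, so $V_L,V_R$ are determined; then alternating the color assignment of successive spine points (and their leaves) along each caterpillar, while coloring the $3\lambda$ nodes at $\ell_0^L$ (resp.\ $\ell_0^R$) half red and half blue so as to split each pair, produces a valid coloring with no lone nodes, which is a permitted tie-broken output. Next I would choose a traversal of the edge-doubled $T$ that advances past each spine point between visits to its two leaves; after shortcutting out the red (resp.\ blue) nodes, the alternation forces both $C_b$ and $C_r$ to pay for essentially every spine edge and every leaf edge twice, giving $c(C_b),c(C_r) \to 2c(T)$ as $\lambda \to \infty$.

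For the opposing bound I would exhibit the ``good'' coloring of Figure~\ref{fig:tighttspgood}, which colors all $3\lambda$ nodes at $\ell_0^L$ red and all at $\ell_0^R$ blue (and hence all other $T_L$-nodes blue and all other $T_R$-nodes red), a valid coloring with respect to the pair constraints. The red tour visits $\ell_0^L$'s $3\lambda$ co-located nodes at zero cost, crosses $e_\times$ once, and traverses a Hamilton cycle through the $3\lambda$ non-root points of $T_R$; the key geometric observation is that on a caterpillar the two leaves at each spine point can be stitched into the cycle using one metric-shortcut edge (a dashed edge in the figure), so that the combined $OPT_b\cup OPT_r$ pays for each tree edge only once in total, yielding $c(OPT_b)+c(OPT_r) \to c(T)$ and $\max\{c(OPT_b),c(OPT_r)\} \to c(T)/2$. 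The main obstacle I anticipate is verifying this ``pay-each-edge-once'' accounting carefully, since generic TSP on a caterpillar does require edge-doubling; the point is that the large batch of co-located nodes at each root gives enough slack for a single triangle-inequality shortcut per spine point to replace all retracing.

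Combining the two sides yields
\begin{equation*}
\frac{c(ALG_b)+c(ALG_r)}{c(OPT_b)+c(OPT_r)} \longrightarrow \frac{4c(T)}{c(T)} = 4
\quad\text{and}\quad
\frac{\max\{c(ALG_b),c(ALG_r)\}}{\max\{c(OPT_b),c(OPT_r)\}} \longrightarrow \frac{2c(T)}{c(T)/2} = 4,
\end{equation*}
establishing tightness of the $\msTspFact$- and $\mmTspFact$-approximation factors for min-sum and min-max 2-TSP respectively.
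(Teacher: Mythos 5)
Your construction matches the paper's in outline---two caterpillars joined at their roots by $e_\times$, clusters of co-located nodes at the roots to absorb the pairing constraint, an alternating coloring with adversarial edge-doubling tie-breaking played against a root-based coloring with favorable tie-breaking---but two of your metric choices differ from the paper's, and each one independently prevents the ratio from reaching $4$.

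First, the leaf edges must have weight $\epsilon$, not $1$. With unit-weight leaves the $4\lambda$ leaf edges contribute $4\lambda$ of $c(T)\approx 6\lambda$, and every tour, good or bad, pays roughly $2$ per leaf just to visit it. A direct count then gives $c(ALG_b)\approx c(ALG_r)\approx\tfrac{4}{3}c(T)$ rather than $2c(T)$: on the forward pass the blue subtour pays $1$ to reach a blue spine point's first leaf and $3$ to shortcut from that leaf past the intervening red spine point to the next blue one, i.e.\ $4$ per blue spine point and about $4\lambda$ per caterpillar, against the claimed $2c(T_L)\approx 6\lambda$. Likewise the good tour pays about $1+2+2=5$ per spine point to collect its two leaves and advance, so $c(OPT_b)\approx\tfrac{5}{6}c(T)$ rather than $c(T)/2$, and the ratio tends to $1.6$. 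The leaves exist only to make the spine points degree $4$ (creating the tie-breaking freedom in the Euler tour); they must be metrically negligible. Second, you define each spine as a \emph{path} from its root and complete to the shortest-path metric, whereas the paper's metric also contains a weight-$(1+\epsilon)$ edge $\{\ell_\lambda^s,\ell_0^s\}$ closing each spine into a near-cycle---these, not leaf-stitching shortcuts, are the dashed edges in Fig.~\ref{fig:tighttspgood}. Without them $d(\ell_\lambda^L,\ell_0^L)=\lambda$, so the good blue tour must retrace the left spine and costs about $2\lambda\approx c(T)$ instead of $c(T)/2$; the pile of co-located nodes at the root gives no help here, since a closed tour in the path metric costs twice the path length no matter how many points sit at one end. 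Fixing only this second issue still caps the ratio at $2$. Both corrections are needed before your claimed limits $c(ALG_b),c(ALG_r)\to 2c(T)$ and $c(OPT_b),c(OPT_r)\to c(T)/2$ hold, after which your concluding ratio computation goes through exactly as in the paper.
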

\ifcccg
\else
\begin{proof} 
We construct a graph as follows. First consider a set of $2\n = 2(3\lambda+1)$ points $\V = \V_L \cup \V_R$ in a metric space (with $|\V_L|=|\V_R|=\n$, and $\lambda$ even), arranged in the form of two {\em caterpillar trees} $\T_L,\T_R$, i.e., graphs with the property that removal of all leaves results in a path graph (see Fig.~\ref{fig:tighttspgoodbad}). In particular, for each of $s \in \{L,R\}$, let the path contained within $\T_s$ follow the points $\ell_0^s,...,\ell_\lambda^s$, where $c(\{\ell_{j-1}^s,\ell_j^s\})=1$ for all $j \in [\lambda]$, and $c(\{\ell_\lambda^s,\ell_0^s\})=1+\epsilon$. Also, for each $j \in [\n]$, and for both $s \in \{L,R\}$, let there be two leaf points, each distance $\epsilon$ from $\ell_j^s$. Finally, let $c(\{\ell_0^L,\ell_0^R\})=1+\epsilon$. Let $\T = \T_L \cup \{\{\ell_0^L,\ell_0^R\}\} \cup \T_R$, and let the distances between all other pairs of points of $\V$ equal their path distances in $\T$.

Now we define a metric-weighted graph on $2n=12\lambda$ nodes $V$, which are located at points of $\V$ as follows. First, for $s \in \{L,R\}$ and $j \in [\lambda]$, one node each is located at $\ell_j^s$ and at each of $\ell_j^s$'s two leaf neighbors. Second, for $s \in \{L,R\}$, $3\lambda$ nodes are co-located at $\ell_0^s$. Each node co-located at $\ell_0^s$ is the partner of a node already placed at another point of $\V_s$.

Now, observe that the max-weight edge $e_\times$ of an MST $T$ on $V$ will be $\{\ell_0^L,\ell_0^R\}$ (or more precisely, an edge between a node located at $\ell_0^L$ a node located at $\ell_0^R$); the resulting $T_L,T_R$ will (if we ignore weight-0 edges between nodes co-located at the same point) both be caterpillar trees; and there will be {\em no} lone nodes in the resulting $V_L,V_R$. Thus every valid coloring assigning distinct colors to each pair's nodes is a can potentially be chosen by Algorithm~\ref{alg:outsp}'s tie-breaking. Because of the degree-4 nodes of $T_L,T_R$, tie-breaking will also play a significant role in the computation of $C,C_b,C_r$ by edge-doubling.

Now, consider two following two colorings and edge-doubling computations.

%
%
%

First (see Fig.~\ref{fig:tighttspbad}), for both $s \in \{L,R\}$, let $\ell_j^s$ and its two leaves be colored blue for all odd $j \in [\lambda]$, and red for all even $j \in [\lambda]$, in which case half the nodes co-located at $\ell_0^s$ are red and half are blue. Moreover, suppose that when $C$ is constructed through edge-doubling (starting from, say, $\ell_0^L$), the tie-breaking determining the order of edges traversed is done in such a way that upon (the first) arrival at each degree-4 node $\ell_j^s$, {\em only one} of its two leaves is visited before advancing to $\ell_{j+1}^s$, with the result that after eventually reaching $\ell_\lambda^s$, the tour visits both its leaves and then doubles back, visiting $\ell_{\lambda-1}^s$'s second leaf, $\ell_{\lambda-2}^s$'s second leaf, and so on, circumnavigating $T_L$ a second time in reverse.
Then $C$ crosses {\em both copies} of each doubled edge of $T$, with no shortcutting savings, and so $c(C)=2c(T)$. Because red and blue alternate back and forth about all of $C$, the $C_b,C_r$ extracted from $C$---call them $ALG_b,ALG_r$---will also obtain no shortcutting savings, each costing the same as $C$, and so
\begin{equation}\label{eq:tspbrbounds2}
c(ALG_b)=c(ALG_r)=2c(T).
\end{equation}


Second (see Fig.~\ref{fig:tighttspgood}), let all nodes co-located at $\ell_0^L$ be colored red (thus all the other nodes of $V_L$ are blue), and all nodes co-located at $\ell_0^R$ be blue (thus all the other nodes of $V_L$ are red). Moreover, suppose that when tour $C$ is constructed (starting from, say, $\ell_0^L$), the tie-breaking is such that upon arrival at each degree-4 node $\ell_j^s$ (for $s \in \{L,R\}$), its two leaves are visited before advancing to $\ell_{j+1}^s$. After visiting $\ell_\lambda^L$ and its leaves, therefore, $C$ will shortcut to $\ell_\lambda^R$, costing only $c(\{\ell_\lambda^L,\ell_0^L\})+c(e_\times)$.
That is, $C$ will pay twice for $e_\times$ and for the $\epsilon$-weight leaf edges, and it will also once for $\{\ell_\lambda^L,\ell_0^L\}$ and $\{\ell_\lambda^R,\ell_0^R\}$ (shown dashed in Fig.~\ref{fig:tighttspgood} because they are not edges of $T$), but it only pays {\em once} for each non-leaf edge of $T_L$ and $T_R$. Moreover, the $C_b$ extracted from $C$---call it $OPT_b$---will, after visiting $\ell_0^R$, shortcut past the rest of $T_R$, returning directly to $\ell_0^L$. That is, it will pay twice for $e_\times$ and $T_L$'s leaf edges, and will also pay once for $\{\ell_\lambda^L,\ell_0^L\}$, but it will only pay once each for $T_L$'s non-leaf edges.
The behavior of $C_r$---call it $OPT_r$---will be symmetric, and so
\begin{equation}\label{eq:tspbrbounds}
c(OPT_b) = c(OPT_r) \approx c(T)/2.
\end{equation}



Combining \eqref{eq:tspbrbounds2} and \eqref{eq:tspbrbounds}, we conclude: ${c(ALG_b)+c(ALG_r) \over c(OPT_b)+c(OPT_r)} \to 4$ and ${\max\{c(ALG_b),c(ALG_r)\} \over \max\{c(OPT_b),c(OPT_r)\}} \to 4$.
\end{proof}



\fi

\section{2-Matching}

\subsection{Preliminaries}


In the case of perfect matching we require that the number of pairs $n$ be even. It will be convenient to re-express the 2-Matching problem as an equivalent problem concerning cycle covers.

We begin with some observations about the nature of feasible solutions in this setting. By definition, two nodes $p_i,q_i$ from the same pair can never be matched because they must receive different colors. Each must then be matched with a node of the same color, and each of {\em those} nodes's partners must receive the opposite color and be matched with a node of that color, and so on, in a consistent fashion. One way to make this consistency requirement concrete is the following alternative description. First, for each pair $\{p_i,q_i\}$, draw a length-2 path (of unit-weight edges) between them, separated by a dummy node $d_i$, and in the resulting $3n$-node graph $G'$ consider instead the task of finding a 2-factor, i.e., a node-disjoint cycle cover, of minimum cost. In particular, consider seeking a cycle cover that uses only unit-weight edges, which would have cost $3n$.

\begin{definition}
Say that a 2-matching or cycle cover is {\em feasible} if it uses only unit-weight edges. We call a non-dummy node of $G'$ (i.e., a node from $G$) a {\em real node}; similarly, we call an edge between a dummy node and a real node $G'$ a {\em dummy edge} and a path $p_i d_i q_i$ a {\em dummy path}; we call an edge between two real nodes a {\em real edge}.
\end{definition}

Observe that any feasible 2-matching in $G$ will induce a 2-factor of $G'$: imagine $G'$ drawn in a ``tripartite'' style, with the red nodes in the left column, the blue nodes in the right column, and the dummy nodes in the center column. Then each path $p_i - d_i - q_i$ forms a ``cross-edge'' (going either left or right), each red edge appears in the left column, and each blue edge appears in the right column. Each non-dummy node is matched with one other node in the 2-matching, so if we combine the edges of the paths $p_i - d_i - q_i$ to those of the matching, then in the graph induced by these edges, each of the $3n$ nodes will have degree 2. This implies the edge set is a 2-factor. Note that the cost of the 2-factor differs by a known amount ($2n$, because each dummy nodes two edges are unit-weight)) from the (min-sum) cost of the corresponding 2-matching.

The problem of finding a minimum-cost 2-factor is known to be polynomial-time solvable by reduction to bipartite  matching (folklore). Unfortunately, a 2-factor of $G'$ will not necessarily induce a valid 2-matching on $G$. In $G'$ as defined, the additional property needed (somewhat analogously to bipartite graphs having no odd cycles) is for {\em each cycle's size to be a multiple of 6}, which we will call a \Ccnsp.

\begin{definition}
Let a \Cc for a given graph be a 2-factor, i.e., a node-disjoint collection of subgraphs covering all nodes, where each subgrraph is a member of $\{C_6, C_{12}, C_{18}, ...\}$.
\end{definition}

\begin{lemma}\label{lem:conn2covs}
Any feasible \Cc for $G'$ will induce a feasible 2-matching for $G$.
\end{lemma}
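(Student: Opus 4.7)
The plan is to work with the structure of cycles in any feasible \Ccnsp, and show that each one can be 2-colored locally (independently of the other cycles) so that every pair $\{p_i,q_i\}$ receives opposite colors and the real edges split into a red matching and a blue matching.

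I would first pin down the rigid local structure. Since each dummy node $d_i$ has degree exactly 2 in $G'$, with its only incident edges being $p_i d_i$ and $q_i d_i$, any 2-factor (and in particular any \Ccnsp) must use both of these edges. Consequently every real node uses its unique dummy edge plus exactly one real edge, so the real edges chosen by the cover already form a perfect matching $M$ on the $2n$ real nodes; it remains only to show that $M$ decomposes into a red matching and a blue matching consistent with pair constraints.

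Next I would analyze a single cycle $C$ of the cover, of length $6k$. Because no two dummies are adjacent, and because each real node has exactly one dummy neighbor (its partner's dummy) and one real-matching neighbor, $C$ must have the repeating local pattern $\textit{real}-\textit{real}-\textit{dummy}$. Thus if $C$ contains $m$ dummy nodes it contains $2m$ real nodes and has length $3m$, so $3m = 6k$ forces $m = 2k$ to be even. Writing the real nodes of $C$ in cyclic order as $r_1,r_2,r_3,\dots,r_{4k}$ with dummy $d_{a_j}$ between $r_{2j}$ and $r_{2j+1}$ (indices mod $4k$), we have two facts for free: the real matching edges along $C$ are $\{r_{2j-1}r_{2j}\}_{j=1}^{2k}$, and $\{r_{2j},r_{2j+1}\} = \{p_{a_j}, q_{a_j}\}$ is one of the original pairs.

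Now I would 2-color $C$ by the rule: paint $r_1,r_2$ red, $r_3,r_4$ blue, $r_5,r_6$ red, $r_7,r_8$ blue, and so on, alternating colors in blocks of two. Two checks close the argument. First, within each matched pair $r_{2j-1}r_{2j}$ the two endpoints share a color, so $M\cap C$ splits cleanly into a red part and a blue part. Second, consecutive real nodes straddling a dummy, $r_{2j}$ and $r_{2j+1}$, always lie in different blocks and hence get opposite colors, which is exactly the pair constraint for $\{p_{a_j},q_{a_j}\}$. The coloring is consistent as we wrap around the cycle precisely because $4k$ is divisible by $4$; this is the one spot where the \Cc hypothesis (rather than a mere $2$-factor hypothesis) is used, and is the main obstacle I want to emphasize, since without $6 \mid 6k$ the alternating pattern would fail to close. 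Applying this coloring independently to each cycle of the cover (and breaking ties arbitrarily on cycles whose colors are swapped) yields a global 2-coloring of the real nodes in which each pair is bichromatic; the red real edges form a perfect matching on the red nodes and the blue real edges form a perfect matching on the blue nodes; and since the cover is feasible all edges used have weight 1, so the induced 2-matching is feasible, as claimed.
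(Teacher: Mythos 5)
Your proof is correct and follows essentially the same route as the paper's: both first observe that the degree-2 dummy nodes force every cycle of the cover to alternate between single real edges and length-2 dummy paths, and then 2-color each cycle by walking around it, using the fact that a cycle length divisible by 6 forces an even number of dummy paths so that the alternation closes up consistently. Your explicit block-of-two coloring and parity count is just a more formal rendering of the paper's iterative coloring procedure, so no substantive difference.
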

\ifcccg
\else
\begin{proof}
Each dummy node $d_i$ has degree-2, with edges to $p_i$ and $q_i$, and every non-dummy node has exactly one dummy neighbor. Therefore in any feasible \Ccnsp, each dummy node $d_i$'s two edges $\{p_i,d_i\},\{q_i,d_i\}$ must appear; moreover, for each real node, exactly {\em one} of its real edges must appear in the \Ccnsp. Thus every such cycle must alternate between single real edges and length-2 dummy paths.

Given the cycle cover, we can therefore construct a valid 2-coloring consistent with the matching it induces by performing the following procedure on each cycle appearing in the cycle. Choose one of its real nodes (say, $p_i$) arbitrarily, and color it (say) red. Then color its dummy neighbor's other neighbor $q_i$ blue, and also color $q_i$'s real neighbor (say, $p_{i'}$) blue. Then go to $p_{i'}$'s dummy neighbor's other neighbor (say, $p_{i''}$), and check whether $p_{i''}$ is the starting node $p_i$. If not, color it red and repeat. Since the roundtrip from $p_i$ back to $p_i$ must involve crossing an even number of dummy paths, it will never happen that we inconsistently attempt to color $p_i$ blue when we return to it.
\end{proof}
\fi

Unfortunately, unlike the problem of deciding whether a graph admits a feasible cycle cover, deciding whether it admits a \Cc is NP-Complete \cite{hell1984packings}. This fact does not immediately imply the hardness of the 2-Matching problems, however, because $G'$ is not an arbitrary graph. We can characterize it as follows. It contains $3n$ nodes consisting of $n$ triples $\{p_i,d_i,q_i\}$, where each $d_i$ is degree 2, with neighbors $p_i,q_i$.

\begin{figure*}[t!]
\center
    \begin{subfigure}[t]{7.5cm}
        \centering
            \includegraphics[width=6.75cm]{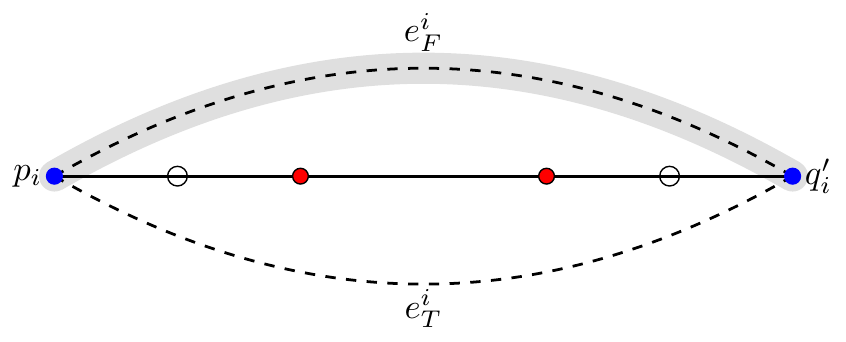}
\ifcccg
\vskip -.1cm
\else
\vskip -.05cm
\fi
            \caption{Variable gadget for $x_i$. Any feasible \Cc must include pseudoedge $e_F^i$ xor edge $e_T^i$.}
        \label{fig:var}
    \end{subfigure}
\hskip .5cm
    \begin{subfigure}[t]{7.5cm}
        \centering
            \includegraphics[width=5.5cm,clip=true,trim={0 0 0 .45cm}]{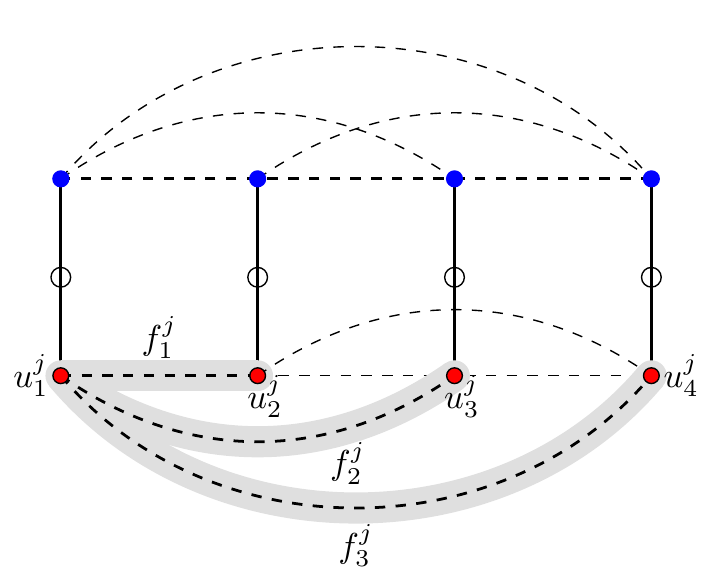}
\ifcccg
\vskip -.1cm
\else
\vskip -.05cm
\fi
            \caption{Clause gadget for $C_j$. Any feasible \Cc must include \textit{exactly one} of the three distinguished pseudoedges $f_1,f_2,f_3$ (plus one of the unlabeled dashed edges from the bottom and two of the top).}
        \label{fig:clause}
    \end{subfigure}

\ifcccg
\vskip .15cm
\else
\vskip .27cm
\fi

    \begin{subfigure}[b]{\textwidth}
        \centering
            \includegraphics[width=\textwidth]{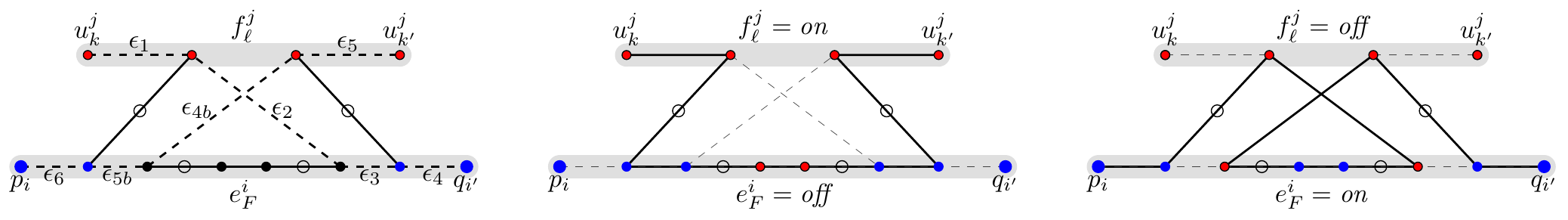}
\ifcccg
\vskip -.1cm
\else
\vskip -.05cm
\fi
            \caption{Connection gadget (left fig.) for an appearance (negated iff $v=F$) of variable $x_i$ in clause $C_j$. The lower shaded path is a more detailed view of one of the $x_i$ gadget's pseudoedge $e_F^i$ (see (\subref{fig:var})); the upper shaded path is a more detailed view of one of the $C_j$ gadget's three distinguished pseudoedges $f_1^j,f_2^j,f_3^j$ (see (\subref{fig:clause})).
We show (see Lemma~\ref{lem:conn2covs}) that there are only two possible feasible {\Ccnsp}s of the gadget, one in which $f_\ell^j$ is \textit{on} and $e_F^i$ is \textit{off}, meaning this connection represents $C_j$'s unique true literal (middle fig.), and one in which $f_\ell^j$ is \textit{off} and $e_F^i$ is \textit{on}, meaning it represents one of $C_j$'s two false literals (right fig.).}
        \label{fig:consistent}
    \end{subfigure}

\vskip .15cm


    \begin{subfigure}[b]{\textwidth}
        \centering
            \includegraphics[width=\textwidth]{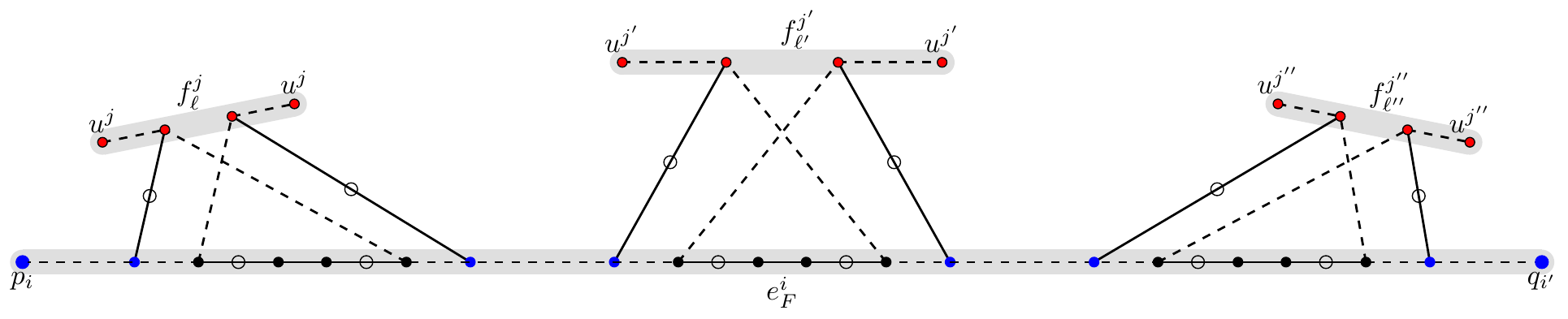}
\ifcccg
\vskip -.1cm
\else
\vskip -.05cm
\fi
            \caption{The example $e_F^i$ shown here is a more detailed view of pseudoedge $e_F^i$ in variable $x_i$'s gadget (see (\subref{fig:var})).
An $e_F^i$ can have multiple connections ({\em in this example}, three), corresponding to appearances of $x_i$'s in different clauses (in this example, an $x_i$ literal appears as the $\ell$th literal in clause $C_j$, and so on; typically $j,j',j''$ will {\em all be distinct}); an $f_\ell^j$ has only one connection, since it indicates what literal the $\ell$th literal in clause $C_j$ is. Subscripts of $u$ nodes are omitted for clarity.}
        \label{fig:nestedconns}
    \end{subfigure}
    
\caption{Gadgets used in 2-Matching's hardness proof. Real nodes are shown filled in, dummy nodes unshaded. Edges that must be used in any feasible solution are shown solid, other edges dashed. $e_F^i$ and $f_1^j,f_2^j,f_3^j$ are pseudoedges, i.e., schematic representations of paths that connections attach to.}\label{fig:matchgadgets}
\end{figure*}

\subsection{Bottleneck 2-Matching: hardness}

To prove hardness, we give a reduction inspired by Papadimitriou's reduction \cite{cornuejols1980matching} from 3-SAT to the problem of deciding whether a graph can be partitioned into a node-disjoint collection of cycles, each of size {\em at least 6}.

We reduce from \textsc{Monotone 1-in-3 SAT} (which has no negated literals) to the problem of deciding whether $G'$ admits a (feasible, i.e., using unit-weight edges only) \Ccnsp.  Recall that edge weights in $G'$ are 1 or 2, and that each dummy node's two edges are weight-1. Given the boolean formula, we proceed as follows. 

For each variable $x_i$, we create a gadget as shown in Fig.~\ref{fig:var}. It consists of a 6-path $(p_i,d_i,q_i,p'_i,d'_i,q'_i)$, whose nodes form two triples $\{p_i,d_i,q_i\},\{p'_i,d'_i,q'_i\}$, plus an edge $(p_i,q'_i)$ labeled $e_T^i$ and a {\em pseudoedge} labeled $e_F^i$. There will be exactly two feasible ways to cover the nodes of this gadget in a \Ccnsp, with the cycle including $e_i^T$, corresponding to $x_i$ being true, and the one including $e_i^F$, corresponding to false.

For each clause $C_j$, we create a gadget as shown in Fig.~\ref{fig:clause}. It consists of two copies of $K_4$, where each node $u_\ell^j$ in one $K_r$ is connected by a 2-path and dummy node to a corresponding node $v_\ell^j$ in the other. Three {\em pseudoedges} connecting a distinguished node $u_0^j$ to the other three nodes of the same $K_4$ are labeled $f_1^j,f_2^j,f_3^j$. If a feasible \Ccnsp, one of these edges will be on and the other two off, corresponding to a satisfied \textsc{1-in-3 SAT} clause.

\begin{definition}
A {\em pseudoedge} is an edge, or the result of attaching a connection gadget to a pseudoedge.
\end{definition}

Finally, to implement the appearance of a variable in a clause, we use the gadget shown in Fig.~\ref{fig:consistent}, which will appear in sequence.
Applying a connection gadget to pseudoedges $e_F^i$ and $f_\ell^j$ does the following:
\begin{enumerate}
\ifcccg
  \setlength\itemsep{.2835em}
\else
\fi
\item the last (rightmost) edge of $f_\ell^j$ is split into a 9-path path via the creation of 8 new nodes (compare $e_F^i$ in Figs.~\ref{fig:var}, \ref{fig:consistent}(left), and \ref{fig:nestedconns});
\item $f_\ell^j$'s edge is replaced with two new edges (labeled $\epsilon_1,\epsilon_5$ in Fig.~\ref{fig:consistent}) incident to two new nodes (compare $f_\ell^j$ in Figs.~\ref{fig:clause} and \ref{fig:consistent}(left));
\item $f_\ell^j$'s first new node is connected to $e_F^i$'s first and seventh new nodes, by a 2-path and an edge, respectively (see Fig.~\ref{fig:consistent}(left)); and
\item $f_\ell^j$'s second new node is connected to $e_F^i$'s second and eighth new nodes, by an edge and a 2-path, respectively (see Fig.~\ref{fig:consistent}(left)).
\end{enumerate}


For each variable $x_{i}$ appearing (in some position $k \in [3]$) within a clause $C_j$, we draw a connection gadget between $x_i$'s $e_F^{i}$ and $C_j$'s $f_k^j$. 
%
%
%
%
First observe the following, which can be verified by inspection:

\begin{fact}
If all pseudoedges $e_F^i$ and $f_k^j$ were simply edges, then a \Cc would induce one of two legal states within any variable $X_i$'s gadget, with exactly one of $e_F^i,e_T^i$ on, and one of three legal states within any clause $C_j$'s gadget, with exactly one of $f_1^j,f_2^j,f_3^j$ on.
\end{fact}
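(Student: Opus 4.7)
The plan is to prove both halves by direct case analysis on each gadget, leveraging the single observation that every dummy node has degree exactly two in $G'$, so both of its incident edges are forced into any 2-factor. In both gadgets this determines a significant portion of the cycle cover and reduces the remaining choice to a matching problem on the real nodes.

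For the variable gadget (Fig.~\ref{fig:var}), I would begin by noting that the degree-2 constraint on $d_i$ and $d'_i$ forces the four edges $(p_i,d_i), (d_i,q_i), (p'_i,d'_i), (d'_i,q'_i)$ into any feasible cover. Each of the four real nodes $p_i, q_i, p'_i, q'_i$ thus already has one incident edge and needs exactly one more, which must come from among the remaining real edges of the gadget together with $e_T^i$ and $e_F^i$. I would then enumerate the possible ``one-edge-per-real-node'' selections; treating this as a perfect matching problem on the four real nodes, the only feasible assignments turn out to be (i) $e_T^i$ on with $e_F^i$ off, and (ii) $e_F^i$ on with $e_T^i$ off, each producing a single 6-cycle covering all six gadget nodes. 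Choosing both would force degree 3 at a shared endpoint and choosing neither would leave some real node with degree 1, so exactly one of $e_T^i, e_F^i$ lies in any feasible \Ccnsp.

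For the clause gadget (Fig.~\ref{fig:clause}), the same argument applied to each of the four dummy nodes lying on the paths between $u_\ell^j$ and $v_\ell^j$ forces those eight edges and reduces the remaining choice to selecting, within each copy of $K_4$, one additional incident edge per vertex, i.e., a perfect matching of the four $u$-vertices (resp.\ $v$-vertices). Since $K_4$ has exactly three perfect matchings and each of them pairs $u_0^j$ with exactly one of $u_1^j, u_2^j, u_3^j$, the three choices correspond exactly to turning on exactly one of $f_1^j, f_2^j, f_3^j$. I would then verify that for each such choice of matching in the upper $K_4$, there is a (unique) compatible choice in the lower $K_4$ so that, threaded together through the three dummy paths, the resulting 2-factor consists of cycles whose lengths are multiples of 6.

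The main obstacle, in both gadgets, is this last step: confirming the cycle-length condition that distinguishes a \Cc from an arbitrary 2-factor. For the variable gadget this is immediate by inspection of the two cases, since in each the whole gadget collapses to a single 6-cycle. For the clause gadget it requires a short but careful tracing of the cycles induced by each compatible matching pair through the dummy paths joining the two $K_4$s, showing that every legal configuration yields only cycles of length divisible by 6 and hence constitutes a feasible \Ccnsp---giving exactly three legal states corresponding to the three truth-settings that satisfy a \textsc{Monotone 1-in-3 SAT} clause.
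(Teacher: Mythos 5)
Your argument is correct and is precisely the ``inspection'' the paper appeals to but never writes out: the Fact is stated without proof, and your observation that every dummy node has degree two (forcing all dummy edges and reducing each gadget to a choice of perfect matching on its real nodes) is the intended verification, yielding the two variable states and the three clause states with exactly one of $e_T^i,e_F^i$ and exactly one of $f_1^j,f_2^j,f_3^j$ on. One small correction: in the clause gadget the compatible matching in the second $K_4$ is \emph{not} unique---any of its three perfect matchings works, since every resulting cycle alternates real edges with length-$2$ dummy paths and must cross an even number of those paths, hence has length $6$ or $12$ in every case---which is consistent with the paper's later remark that the clause gadget's other edges can be freely used, and does not affect the conclusion about which distinguished pseudoedge is on.
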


Now we show that any \Cc will induce one of two canonical states on each connection gadget (see Fig.~\ref{fig:consistent} middle and right), each pseudoedge, and each variable gadget.

\begin{lemma}\label{lem:connstate}
Within any pseudoedge pair $(e_F^i,f_k^j)$
connected by a connection gadget, a feasible \Cc induces one of only two legal states: one with the first and last edges (labeled $\epsilon_1$ and $\epsilon_5$, respectively, in Fig.~\ref{fig:consistent}(left)) within $f_k^j$ on (``$f_k^j$ is on''), and the other with with the first and last edges (labeled $\epsilon_6$ and $\epsilon_4$, respectively, in Fig.~\ref{fig:consistent}(left)) within $e_F^i$ on (``$e_F^i$ is on'').
\end{lemma}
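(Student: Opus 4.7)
The plan is to argue by local propagation from forced edges, exploiting the fact that in any feasible \Ccnsp\ every dummy node contributes exactly its two unit-weight incident edges (since each dummy has degree $2$ in $G'$). Consequently, for every real node adjacent to a dummy path, one ``half'' of its degree-$2$ requirement is predetermined, and the node must complete its degree by selecting \emph{exactly one} additional unit-weight real edge from its small list of remaining neighbors in the gadget. I would begin by listing, for each of the real nodes introduced inside the connection gadget (the two new nodes created on $f_\ell^j$ and the eight new nodes splitting the last edge of $e_F^i$, together with the endpoints already present on the two pseudoedges), the catalogue of its unit-weight real neighbors together with the forced dummy edge(s). This reduces the lemma to a binary constraint-satisfaction instance on a handful of edge variables.

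Next I would carry out the case analysis starting from an endpoint with only two remaining options, for instance the first new node on $f_\ell^j$, call it $a$, whose remaining real choices are $\epsilon_1$ (continuing into the rest of the $f_\ell^j$ pseudoedge) or the edge across to the seventh new node on $e_F^i$. In the first subcase ($\epsilon_1$ on), the edge across is off, and the forced dummy path on the other side of $a$ continues into $e_F^i$; chasing the implications node-by-node through the $2$-paths and the new edges labeled $\epsilon_2,\epsilon_3$ turns off $\epsilon_6$ and $\epsilon_4$ while forcing $\epsilon_5$ on, yielding the ``$f_\ell^j$ is on'' state drawn in Fig.~\ref{fig:consistent}(middle). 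In the second subcase ($\epsilon_1$ off), the symmetric propagation, following the edge from $a$ to the seventh new node on $e_F^i$ and then pushing forced/excluded labels through $a$'s counterpart at the second new node of $f_\ell^j$, turns $\epsilon_5$ off and forces $\epsilon_6,\epsilon_4$ on, giving the ``$e_F^i$ is on'' state drawn in Fig.~\ref{fig:consistent}(right). The remaining ``mixed'' attempts (both pseudoedges on, or both off) are to be ruled out by exhibiting, in each case, a real node that is either left with zero admissible choices or forced into degree~$3$.

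Finally, I would verify realizability: for each of the two surviving states, the edges selected inside the connection gadget combine with the two legal states of the variable gadget (Fig.~\ref{fig:var}) and the three legal states of the clause gadget (Fig.~\ref{fig:clause}) to form closed walks whose lengths are divisible by $6$. Since the connection gadget introduces an even number of real nodes and the inserted dummy $2$-paths only add multiples of~$3$ to the traversed cycle, checking divisibility by $6$ for the two canonical states is a direct count along the path drawn in Fig.~\ref{fig:consistent}.

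The main obstacle is the bookkeeping: the gadget has enough new nodes and edges that the propagation branches must be tracked carefully to make sure no ``mixed'' partial assignment quietly extends to a feasible $2$-factor. I expect the cleanest way to manage this is to fix a small table of (node, forced-dummy-edge, remaining-real-choices) triples and then mechanically apply the exclusion rule ``if $u$'s forced edges already supply degree $2$, then all its other real edges are off; otherwise exactly one of the remaining real edges is on,'' until every variable is determined or a contradiction appears.
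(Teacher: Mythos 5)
Your overall strategy---forced dummy edges, then degree-two propagation through the connection gadget starting from a node with only two remaining choices---is the same implication-chasing argument the paper uses, and most of your chain goes through. But there is one concrete gap: you claim the bad branches are ruled out ``by exhibiting, in each case, a real node that is either left with zero admissible choices or forced into degree~3.'' Degree constraints alone do not suffice. In the paper's propagation, after $\epsilon_1$ on forces $\epsilon_2$ off and $\epsilon_3$ on, the edge labeled $\epsilon_{4b}$ is \emph{not} excluded by any degree count: switching it on closes a 9-cycle, which is a perfectly valid 2-factor component (every node on it has degree two) but is not a legal component of a \Ccnsp, since $9$ is not a multiple of $6$. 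The only way to kill that branch is to invoke the cycle-length requirement itself, inside the propagation. You do mention divisibility by 6, but only in your final ``realizability'' paragraph for the two surviving states; under your stated mechanical exclusion rule the propagation either stalls at $\epsilon_{4b}$ or silently admits a third ``state'' containing a 9-cycle. You need to add ``no cycle of length not divisible by 6 may be closed'' to your exclusion rules.

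A second, smaller omission: you analyze the gadget as if $f_k^j$ were $e_F^i$'s only connection. When $x_i$ appears in several clauses the connections are chained along $e_F^i$ (Fig.~\ref{fig:nestedconns}), and one connection's $\epsilon_4$ edge is precisely the next connection's $\epsilon_6$ edge, so the general statement follows by repeating the single-connection propagation along the chain. Your table of (node, forced-dummy-edge, remaining-choices) triples should be set up with exactly these shared edges as its boundary, so that this induction can be run; as written, your catalogue of ``the eight new nodes splitting the last edge of $e_F^i$'' presupposes a single connection.
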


\ifcccg
\else

\begin{proof}
First, assume $f_k^j$ is $e_F^i$'s only connection.
Suppose edge $\epsilon_1$ is on (see Fig.~\ref{fig:consistent}(left)).
$\epsilon_1$ on implies $\epsilon_2$ off, which implies $\epsilon_3$ on, which implies both $\epsilon_4$ and (because otherwise a 9-cycle would be formed) $\epsilon_{4b}$ off; $\epsilon_{4b}$ off implies both $\epsilon_5$ and $\epsilon_{5b}$ on; and $\epsilon_{5b}$ implies $\epsilon_6$ off. Similarly, if instead $\epsilon_6$ is on, then this will eventually imply that $\epsilon_4$ is on and that both $\epsilon_1$ and $\epsilon_5$ are off.

Now suppose $f_k^j$ is only one of multiple connections of $e_F^i$'s, say, the first (leftmost) one (see Fig.~\ref{fig:nestedconns}). But the first connection's $\epsilon_4$ edge (see Fig.~\ref{fig:consistent}(left)) is also the second connection's $\epsilon_6$ edge. Therefore by repeated application of the single-connection argument, the result follows for the general case.
\end{proof}

\fi

This immediately implies:

\begin{corollary}
A feasible \Cc induces one of two canonical states within each variable gadget and one of three canonical states within each cause gadget.
\end{corollary}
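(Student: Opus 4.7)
The plan is to show that the corollary follows almost immediately from Lemma~\ref{lem:connstate} together with the preceding Fact. The key observation is that Lemma~\ref{lem:connstate} establishes that each pseudoedge behaves, at its two endpoints within its host gadget, exactly like a single on/off edge: either the two ``boundary'' edges of the pseudoedge (those incident to the host variable or clause gadget) are both included in the \Ccnsp\ (pseudoedge ``on''), or both excluded (pseudoedge ``off''), with the interior of the connection gadget determined in a forced manner in each case. Thus, from the point of view of the nodes of the variable and clause gadgets themselves, we may safely ``contract'' each connection gadget and treat the pseudoedge as though it were a single edge.

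Once this abstraction is made, I would argue the two cases separately. For a variable gadget $x_i$, after replacing the pseudoedge $e_F^i$ with a single edge, we are left precisely with the 6-node graph of Fig.~\ref{fig:var}, on which the Fact directly asserts that any feasible \Cc\ must put exactly one of $e_T^i,e_F^i$ on; these are the two canonical states claimed. For a clause gadget $C_j$, after replacing each of $f_1^j,f_2^j,f_3^j$ by a single edge, we recover the graph of Fig.~\ref{fig:clause}, and the Fact tells us that exactly one of $f_1^j,f_2^j,f_3^j$ is on; these are the three canonical states claimed.

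What needs to be checked carefully, and will be the main (though modest) obstacle, is that the correspondence between ``pseudoedge on/off'' and ``cycle passes through this edge / does not pass through this edge'' is faithful, i.e., that the degree-2 constraint at the nodes of the variable or clause gadget (which are the shared endpoints with the pseudoedge) is enforced identically in both settings. This is exactly what Lemma~\ref{lem:connstate} guarantees: when the pseudoedge is ``on,'' the host-gadget endpoints receive one unit of degree from the pseudoedge, while when it is ``off,'' they receive none, which is the same behavior as a simple edge being in or out of the cycle cover. Combining this observation with the Fact yields the corollary.
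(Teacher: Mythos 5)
Your proposal is correct and matches the paper's intent: the paper gives no explicit proof, simply asserting that the corollary ``immediately'' follows from Lemma~\ref{lem:connstate} together with the preceding Fact, which is exactly the combination you spell out. Your additional care in checking that the on/off behavior of a pseudoedge at its host-gadget endpoints faithfully mimics a simple edge (so the degree-2 constraints transfer) is a reasonable elaboration of the same argument, not a different route.
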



In a solution where the clause's edge $f^j_k$ is on, this forces $e_{i_k}^F$ to be off, and hence $e_{i_k}^T$ to be on; similarly, it forces clause $C_j$'s other two distinguished pseudoedges to be off, and hence the variables connected to those edges to be false. (The clause gadget's other edges can be freely used or not, as needed to form a feasible \Ccnsp.)
\ifcccg
\else



\fi
Finally, observe that the final constructed graph $G'$ indeed satisfies the required structure for corresponding to an equivalent instance $G$ of the 2-Matching problem: every dummy node has exactly two neighbors (both real), and every real node has exactly one dummy neighbor. 

From the arguments above, we conclude that $G'$ admits an all-unit weight \Cc iff $G$ admits an all-unit weight 2-matching iff the underlying boolean formula is satisfiable. 
Thus we conclude:

\begin{theorem}
In the special case of metric graphs with weights 1 and 2, bottleneck 2-Matching is NP-hard to approximate with factor better than 2 (and min-sum and min-max 2-Matching are both (strongly) NP-Complete).
\end{theorem}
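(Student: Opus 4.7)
The plan is to invoke the reduction from \textsc{Monotone 1-in-3 SAT} constructed in the preceding pages and argue that it produces a gap in the bottleneck value between 1 (satisfiable instances) and 2 (unsatisfiable instances), with all other edges of $G$ set to weight 2.

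First I would observe that the instance $G'$ constructed above has only weights in $\{1,2\}$, the non-gadget edges being weight-2 in the underlying metric graph $G$, and that by Lemma~\ref{lem:conn2covs} together with Lemma~\ref{lem:connstate} and its corollary, $G'$ admits a feasible (all-unit-weight) \Cc if and only if the input formula is satisfiable. Translating back to $G$, this says: a 2-matching of $G$ using \emph{only} weight-1 edges exists iff the formula is satisfiable.

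Now I would split into the two claims. For the bottleneck case, suppose there were a polynomial-time approximation with factor strictly less than~2. On a \emph{yes} instance the optimal bottleneck value is 1, so the algorithm must output a solution of bottleneck value strictly less than 2, hence equal to 1 (the only smaller weight available). On a \emph{no} instance every feasible 2-matching must contain at least one weight-2 edge, so the algorithm's output has bottleneck 2. We could therefore decide \textsc{Monotone 1-in-3 SAT} in polynomial time, contradicting its NP-hardness; this gives the claimed inapproximability threshold of~2.

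For min-sum and min-max 2-Matching I would argue NP-completeness of the decision versions directly. Any feasible 2-matching of $G$ uses exactly $n$ edges, so its min-sum cost is between $n$ and $2n$, and its min-max cost is between $1$ and $2$. A satisfying assignment yields a 2-matching of min-sum cost exactly $n$ and min-max cost exactly $1$, and conversely any 2-matching achieving these values must use only weight-1 edges and thus induces a satisfying assignment. Deciding whether the min-sum optimum equals $n$ (respectively, whether the min-max optimum equals 1) is therefore NP-complete; since all weights are in $\{1,2\}$ and hence encoded in unary, the hardness is strong. The main (and essentially only) obstacle is already resolved by the corollary stating that every feasible \Cc induces canonical on/off states in each variable and clause gadget; the rest of the argument is a routine gap-preservation and decision-to-search check.
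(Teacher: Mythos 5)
Your proposal follows essentially the same route as the paper: both rest on the already-established equivalence that $G$ admits an all-unit-weight 2-matching iff $G'$ admits a feasible cycle cover iff the formula is satisfiable, and the paper's own proof is merely a two-line observation to this effect, so your more explicit gap argument for bottleneck and decision-version argument for the other objectives are exactly what is intended. One small slip: the min-max objective is $\max\{c(R),c(B)\}$, the larger of the two matchings' \emph{total} costs, so on a satisfiable instance it equals $n/2$ (each color class has $n/2$ unit-weight matching edges), not a value between $1$ and $2$ --- that range describes the bottleneck objective; the decision question should be whether the min-max optimum equals $n/2$, after which the rest of your argument goes through unchanged.
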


\ifcccg
\else
\begin{proof}
For min-max, observe that the two resulting matchings will use all unit edges iff the formula is satisfiable.

For bottleneck's hardness of approximation, observe that any solution will be forced to use some weight-2 (i.e., nonexistent) edge iff the formula is unsatisfiable.
\end{proof}
\fi

\subsection{Min-sum/min-max 2-Matching: hardness}

By reduction from a special case of \textsc{Max 1-in-3 SAT}, we can obtain a hardness of approximation result for the min-sum and min-max objectives. Let \textsc{Max 1-in-3 SAT-5} denote \textsc{Max 1-in-3 SAT} under the restriction that each variable appears in at most 5 clauses.

Lampis has shown (implicitly in \cite{lampis2012improved}\footnote{Karpinksi et al.~\cite{karpinski2015new} provide a similar construction yielding a stronger hardness of approximation lower bound for Metric TSP, but adapting that construction to our present problem actually leads to a slightly weaker lower bound.}) the following:

\begin{lemma}
There exists a family of \textsc{Max 1-in-3 SAT-5} instances with $15m$ clauses and $8.4m$ variables, each appearing in at most 5 clauses, for which, for any $\epsilon>0$, 
it is NP-hard to decide whether the minimum number of unsatisfiable clauses is at most $\epsilon m$ or at least $(0.5-\epsilon)m$.
%
%
\end{lemma}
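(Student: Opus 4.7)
The plan is to derive this lemma as a chain of gap-preserving reductions starting from H\aa{}stad's classical inapproximability of \textsc{Max-E3-Lin-2}, essentially retracing the construction implicit in \cite{lampis2012improved}. H\aa{}stad's theorem supplies, for every $\delta>0$, a family of \textsc{Max-E3-Lin-2} instances for which it is NP-hard to distinguish the case that a $(1-\delta)$-fraction of the equations is simultaneously satisfiable from the case that no assignment satisfies more than a $(1/2+\delta)$-fraction. To force the variable degree to be bounded, I would apply the standard expander-replacement trick: a variable $x$ that occurs in $t$ equations is replaced by $t$ fresh copies wired together by equality constraints whose graph is a constant-degree expander on $t$ vertices; the expansion property guarantees that any nontrivial 2-coloring of the copies violates a constant fraction of the equalities, so the gap survives up to arbitrarily small additive loss.

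Second, I would reduce bounded-occurrence \textsc{Max-E3-Lin-2} to \textsc{Monotone Max 1-in-3 SAT} via a small local gadget. For each equation $x+y+z=b$, one builds a constant-size family of 1-in-3 clauses on $\{x,y,z\}$ and on a handful of auxiliary variables, designed so that (i)~any 1-in-3-satisfying assignment of the gadget projects to a solution of the equation, and (ii)~a violating input leaves a fixed number $\kappa$ of the gadget's clauses unsatisfied. Negated literals are eliminated by introducing a complement variable per literal, pinned to its correct value by its own 1-in-3 gadget, yielding a Monotone instance. A second round of expander replacement, applied to the newly introduced auxiliary and complement variables, restores the per-variable occurrence bound to $5$.

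Third, I would calibrate the gadget multiplicities, the size of the source H\aa{}stad instance, and the expander degrees so that the final \textsc{Monotone Max 1-in-3 SAT} instance has exactly $15m$ clauses and $8.4m$ variables, and so that the $(1/2-\delta)$-fraction unsatisfied-equations gap translates into an unsatisfied-clauses count of at most $\varepsilon m$ in the YES case and at least $(1/2-\varepsilon)m$ in the NO case. Trivially 1-in-3-satisfiable padding built from fresh variable triples can be appended to pin both counts to their stated values.

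The main obstacle will be the simultaneous balancing of the three numerical targets --- clause count $15m$, variable count $8.4m$, and occurrence bound $5$ --- while keeping the additive loss in the gap arbitrarily small. Every reduction step multiplies clauses and variables by its own gadget-specific constants and injects an $O(\delta)$ additive loss, so the feasibility of the joint parameter choice rests on independently tuning the gadget sizes, the expander degree, and the padding multiplicity, and on absorbing the $O(\delta)$ loss into $\varepsilon$ by first choosing $\delta$ sufficiently small and then amplifying the starting instance by an integer scale factor.
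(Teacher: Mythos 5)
The paper does not actually prove this lemma: it is imported as a black box, credited to Lampis~\cite{lampis2012improved} (``implicitly''), with a footnote noting that the Karpinski et al.\ construction would give slightly worse constants for this application. So there is no in-paper argument to compare yours against; the question is whether your sketch would establish the statement on its own, and as written it would not. The qualitative skeleton you describe (H\aa{}stad's \textsc{Max-E3-Lin-2} hardness, degree reduction by expander/amplifier replacement, local gadgets into monotone 1-in-3 form) is indeed the machinery behind Lampis's result. But the entire content of the lemma is quantitative --- exactly $15m$ clauses, $8.4m$ variables, occurrence bound $5$, and an absolute gap of $\epsilon m$ versus $(0.5-\epsilon)m$ --- and your proof defers all of that to an unexamined ``calibration'' step. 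These constants are not free parameters that can be tuned independently; they are determined by the specific amplifier (Lampis's wheel, or the bi-wheel of Karpinski--Lampis--Schmied) and by the specific equation-to-1-in-3 gadget, and the downstream Corollary~\ref{cor:summaxmatchhard} consumes them literally (the $8305/8304$ bound is computed from $15m$, $8.4m$, and the occurrence bound $5$). A proof that does not derive them establishes only the much weaker qualitative fact that bounded-occurrence \textsc{Max 1-in-3 SAT} has some constant gap.

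Two of your concrete steps are also problematic. First, the padding trick fails: appending trivially satisfiable clauses increases the clause count, hence increases $m$ (which is defined by clause count $=15m$), while leaving the absolute number of unavoidably unsatisfied clauses unchanged, so it dilutes the gap below $(0.5-\epsilon)m$ rather than ``pinning'' the counts. Second, the claim that expander replacement preserves the gap ``up to arbitrarily small additive loss'' is true only for the absolute number of violated constraints; as a fraction of the (much larger) resulting instance the gap shrinks by exactly the blow-up factor, and tracking that factor is precisely where the constants $15$ and $8.4$ come from, so it cannot be absorbed into $\epsilon$. Finally, a sanity check suggests the parameters must be read off from the source rather than re-derived from generic gadgets: $8.4m$ variables each occurring at most $5$ times supply at most $42m$ literal slots, while $15m$ three-literal clauses require $45m$, so a uniform occurrence bound of $5$ is incompatible with your plan of hitting these counts with standard gadgets, and the correct accounting has to follow Lampis's construction exactly.
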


For concreteness, let \textsc{Min Not-1-in-3 SAT-5} indicate the optimization problem of minimizing the number of unsatisfied clauses in a \textsc{1-in-3 SAT-5} formula.

%

%


Now we argue that the same construction used above provides an approximation-preserving reduction from \textsc{Min Not-1-in-3 SAT-5}.

\begin{corollary}\label{cor:summaxmatchhard}
Min-sum and min-max 2-Matching are both, in the special case of metric graphs with weights 1 and 2, NP-hard to approximate with factor better than $8305/8304 \approx 1.00012$.
\end{corollary}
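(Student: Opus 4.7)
The plan is to verify that the same gadget construction used in the bottleneck reduction is \emph{approximation-preserving} when it is started from the hard \textsc{Min Not-1-in-3 SAT-5} instances guaranteed by Lampis's lemma. The key observation is that $k$ unsatisfied clauses must translate into exactly $k$ forced uses of weight-$2$ edges in any feasible 2-matching of the constructed graph $G$, so the additive gap $(0.5-2\epsilon)m$ in the number of unsatisfied clauses becomes an additive gap of $(0.5-2\epsilon)m$ in the min-sum cost.

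First I would carefully count the total number $N$ of real nodes in $G$, summing the contributions of the $8.4m$ variable gadgets (6 real nodes each), the $15m$ clause gadgets (8 real nodes each from the two $K_4$'s plus the dummy bridges between them), and the $45m$ connection gadgets (a fixed number of new real nodes each, read off from Figure~\ref{fig:consistent} after accounting for the nested splittings illustrated in Figure~\ref{fig:nestedconns}). The target is to verify that $N/m$ equals the constant forced by the announced ratio, namely $N = 8304\,m$ (in the min-sum normalization used below).

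Next I would show that in any feasible 2-matching of $G$, the min-sum cost equals $N/2$ (the contribution from $N/2$ unit edges) plus the number of weight-$2$ edges used. The core combinatorial claim is that any 2-matching of $G$ induces, via Lemma~\ref{lem:conn2covs} and Lemma~\ref{lem:connstate}, a truth assignment plus a selection of a ``satisfied literal'' $f_\ell^j$ in each clause; each clause in which no $f_\ell^j$ corresponds to a true literal (i.e., each unsatisfied clause of 1-in-3 SAT) must use at least one weight-$2$ edge within its clause gadget to close up its portion of the cover, and conversely each satisfied clause admits a covering that uses only unit edges. Because every weight-$2$ edge in the construction lies entirely inside a single clause gadget or a single connection gadget (which is itself attached to a single clause), no weight-$2$ edge can be shared between two unsatisfied clauses, so the number of weight-$2$ edges used is at least the number of unsatisfied clauses.

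Combining these steps: a YES instance (at most $\epsilon m$ unsatisfied clauses) yields a 2-matching of min-sum cost at most $N/2+\epsilon m$; a NO instance (at least $(0.5-\epsilon)m$ unsatisfied clauses) forces min-sum cost at least $N/2+(0.5-\epsilon)m$. Letting $\epsilon \to 0$ and plugging in $N=8304\,m$ gives the ratio $(N/2 + 0.5 m)/(N/2) = 8305/8304$. For the min-max objective, I would use $\max\{c(R),c(B)\} \ge (c(R)+c(B))/2$ to push the NO-case lower bound to $N/4+(0.5-\epsilon)m/2$, together with a balanced YES-case upper bound of $N/4+\epsilon m/2$, yielding the same limit ratio. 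The main obstacle will be Step~1---getting the real-vs-dummy node count exactly right through the nested pseudoedge-splitting of the connection gadgets---and the locality claim in Step~3 that weight-$2$ edges cannot be amortized across multiple unsatisfied clauses; both reduce to inspection of the gadgets in Figure~\ref{fig:matchgadgets}.
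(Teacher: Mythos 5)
There is a genuine gap, and it sits exactly where you flagged your ``main obstacle.'' Your central quantitative claim is that $k$ unsatisfied clauses force at least $k$ weight-$2$ edges, justified by a locality argument that every weight-$2$ edge lies inside a single clause gadget (or a connection gadget attached to a single clause) and hence cannot be amortized across clauses. This is not sound: the instance is a complete metric graph with weights $\{1,2\}$, so weight-$2$ edges exist between \emph{arbitrary} pairs of real nodes, and a 2-matching is free to use one anywhere. More importantly, a single weight-$2$ edge incident to a \emph{variable} gadget (or to one of its $e_F^i$ pseudoedge paths) can break the canonical-state propagation of Lemma~\ref{lem:connstate} once and thereby repair the cover for every clause containing that variable --- up to $5$ of them, since the reduction is from \textsc{1-in-3 SAT-5}. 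This is precisely why the paper only claims that $k$ unsatisfied clauses necessitate at least $k/5$ weight-$2$ edges, turning the gap of $(0.5-\epsilon)m$ unsatisfied clauses into an additive cost gap of only about $0.1m$, not $0.5m$.

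Your final ratio nevertheless comes out to $8305/8304$ only because a second discrepancy cancels the first: you posit $N=8304m$ real nodes (min-sum base cost $N/2=4152m$), whereas the paper's explicit count ($16$ unit edges per clause, $6$ per variable, $12$ per connection) gives a base cost of $15m\cdot 16+8.4m\cdot 6+45m\cdot 12=830.4m$, i.e.\ $N=1660.8m$ --- smaller than your figure by exactly the factor of $5$ you gained by overclaiming the per-clause penalty. So the number $8304$ in your Step~1 is reverse-engineered from the target ratio under the wrong gap, and the construction will not verify it. The fix is to adopt the paper's accounting: base cost $830.4m$, penalty at least $k/5$ via the $5$-bounded occurrence, giving $(830.4m+0.1m)/(830.4m)=8305/8304$; your min-max step (using $\max\{c(R),c(B)\}\ge (c(R)+c(B))/2$ against a balanced YES-case solution) then goes through as in the paper with both quantities halved.
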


\ifcccg
\else

\begin{proof}(Sketch.)
By inspection of the construction's gadgets, we observe that given a satisfying assignment for the \textsc{Min Not-1-in-3 SAT-5} formula, the corresponding matching problem solution will use 16 weight-1 edges per clause, 6 per variable, and 12 per connection. This yields a total min-sum cost of exactly 
$15m \cdot 16 + 8.4m \cdot 6 + 15m \cdot 3 \cdot 12 = 830.4m$.


Now suppose the formula's optimal solution leaves $k$ clauses unsatisfied. 
Unsatisfied clauses will force the resulting matching problem solution to use weight-2 edges, either within the clause gadget or elsewhere. Because variables are limited to 5 appearances, 
$k$ unsatisfied clauses will necessitate the use of at least $k/5$ weight-2 edges, each replacing a weight-1 edge, thus increasing the solution cost by at least $k/5$. $(0.5-\epsilon)m$ unsatisfied clauses imply an added cost of $(0.5-\epsilon)m/5 \approx 0.1m$.

Hence the two specified types of \textsc{Min Not-1-in-3 SAT-5} instances that are NP-hard to distinguish will translate into matching problem instances with optimal min-sum solution costs approximately $830.4m$ and $830.5m$, respectively.

Similarly, for a satisfiable formula, the matching problem instance will have an optimal min-max cost of exactly $830.4m/2 = 415.2m$. $(0.5-\epsilon)m$ unsatisfied clauses imply that the maximum of the two resulting tree weights will increase by at least approximately $0.1m/2 = 0.05m$. This leads to min-max solution costs approximately $415.2m$ and $415.25m$, respectively, again yielding the same ratio.
%
%
%
%
\end{proof}

\fi

\ifcccg
\small
\noindent
\textbf{Acknowledgements.}
\else
\section*{Acknowledgements}
\fi
This work was supported in part by NSF award INSPIRE-1547205, and by the Sloan Foundation via a CUNY Junior Faculty Research Award.
We thank Ali Assapour, Ou Liu, and Elahe Vahdani for useful discussions.

\ifcccg
\footnotesize
\else
\fi

\bibliographystyle{abbrv}
\bibliography{bib}

\end{document}